\documentclass[11pt]{article}
\usepackage{fullpage}

\usepackage[letterpaper, margin=1in]{geometry}

\usepackage{authblk}
\usepackage{amsthm}
\usepackage{amsmath}
\usepackage[table,xcdraw]{xcolor}
\usepackage{multirow}
\usepackage{wrapfig}
\usepackage[leftcaption]{sidecap}
\usepackage{multicol}
\usepackage{thmtools}
\usepackage{thm-restate}
\usepackage{nicefrac}
\usepackage[boxed,lined,linesnumbered, ruled, noend]{algorithm2e}
\usepackage{comment,amsfonts,amsmath,amsthm,graphicx,subcaption}
\usepackage{wrapfig}

\usepackage{booktabs}

\usepackage{natbib}

\definecolor{Darkblue}{rgb}{0,0,0.4}
\definecolor{Brown}{cmyk}{0,0.61,1.,0.60}
\definecolor{Purple}{cmyk}{0.45,0.86,0,0}
\definecolor{Darkgreen}{rgb}{0.133,0.543,0.133}

\usepackage[colorlinks,linkcolor=Darkblue,filecolor=blue,citecolor=blue,urlcolor=Darkblue,pagebackref]{hyperref}
\usepackage[nameinlink]{cleveref}

\usepackage{enumitem}

\newcommand{\commentout}[1]{}

\def\epsilon{\varepsilon}
\def\eps{\varepsilon}

\newcommand{\alert}[1]{\textbf{\color{red}
		[[[#1]]]}\marginpar{\textbf{\color{red}**}}\typeout{ALERT:\@
		\the\inputlineno: #1}}

\usepackage[colorinlistoftodos,prependcaption,textsize=tiny]{todonotes}
\newcommand{\atodo}[1]{}
\newcommand{\atodoin}[1]{}

\newcommand{\gtodo}[1]{}
\newcommand{\gtodoin}[1]{}

\def\eps{\epsilon}

\newcommand{\cost}{{\rm cost}}

\newcommand{\opt}{{\rm OPT}}

\newcommand{\poly}{{\rm poly}}

\newcommand{\rad}{{\rm rad}}

\newcommand{\mommit}[1]{}
\newcommand{\namedref}[2]{\hyperref[#2]{#1~\ref*{#2}}}

\newcommand{\ball}{{\sf ball}}
\newcommand{\ballp}{\ensuremath{{\sf \cB^+}}}

 \SetAlgoNoEnd

\theoremstyle{plain}
\newtheorem{theorem}{Theorem}[]
\newtheorem{lemma}{Lemma}[]
\newtheorem{claim}[lemma]{Claim}

\newtheorem{definition}{Definition}
\newtheorem{corollary}[theorem]{Corollary}

\newtheorem{observation}{Observation}

\newtheorem{question}{Question}[]

\newcommand{\lsum}{{\sf LSUM}}
\newcommand{\rsum}{{\sf RSUM}}
\newcommand{\msum}{{\sf MSUM}}
\newcommand{\lmsum}{{\sf LMSUM}}

\newcommand{\Ext}{\textsf{Ex}}

\newcommand{\cA}{\mathcal{A}}
\newcommand{\cB}{\mathcal{B}}
\newcommand{\cC}{\mathcal{C}}
\newcommand{\cD}{\mathcal{D}}

\newcommand{\cF}{\mathcal{F}}

\newcommand{\cI}{\mathcal{I}}
\newcommand{\cJ}{\mathcal{J}}

\newcommand{\cL}{\mathcal{L}}

\newcommand{\cR}{\mathcal{R}}
\newcommand{\cS}{\mathcal{S}}

\newcommand{\cU}{\mathcal{U}}

\newcommand{\sor}{\textsf{Sum of Radii}\xspace}

\newcommand{\gbctime}{$2^{O(k\log k)} n$}

\newcommand{\np}{\textsf{NP}\xspace}
\newcommand{\fpt}{\textsf{FPT}}

\newcommand{\eth}{\textsf{ETH}}

\newcommand{\nonblind}[1]{}

\newcommand{\Wtwo}{\ensuremath{\textsc{W}[2]}\xspace}
\newcommand{\scover}{Set Cover\xspace}
\newcommand{\domset}{Dominating Set\xspace}

\newcommand{\msr}{$k$-MSR\xspace}

\newcommand{\abfair}{$(\alpha,\beta)$-fair \msr}
\newcommand{\abfairsc}{$(\alpha,\beta)$-Fair Clustering\xspace}

\newcommand{\met}{\ensuremath{(X,d)}\xspace}

\newcommand{\msrins}{\ensuremath{(\met,k)}}

\author{Ameet Gadekar}

\affil{CISPA Helmholtz Center for Information Security, Saarbr\"{u}cken, Germany.\qquad\qquad\qquad Email:  \texttt{ameet.gadekar@cispa.de}}

\title{On the Parameterized Approximability of\\ (Mergeable) Sum of Radii Clustering
}
\date{}
\begin{document}
\pagenumbering{gobble}

\maketitle

\begin{abstract}
The sum of radii problem ($k$-MSR) asks, given a metric space on $n$ points, to place $k$ balls covering all points so as to minimize the sum of their radii. Despite extensive study from the perspectives of approximation and parameterized algorithms, the exact parameterized complexity of the problem and the existence of efficient parameterized approximation schemes remained open. We advance this understanding on both the hardness and algorithmic fronts.

We begin by showing that $k$-MSR is $\textsc{W}[2]$-hard parameterized by $k$, thereby pinpointing its location in the $\textsc{W}$-hierarchy. Moreover, via our reduction, we rule out efficient parameterized approximation schemes (EPAS)—that is, $(1+\epsilon)$-approximations running in time $f(k,\epsilon)\cdot \mathrm{poly}(n)$—unless $\textsc{W}[2] = \textsc{FPT}$. Assuming the Exponential Time Hypothesis, we further rule out such algorithms running in time $f(k,\epsilon)\cdot n^{o(k)}$, strengthening recent lower bounds for the problem.

On the algorithmic side, we study $k$-MSR under the framework of \emph{mergeable constraints}, which captures a broad class of clustering constraints, including fairness, diversity, and lower bounds. We obtain an $\textsc{FPT}$ $(\frac{8}{3}+\epsilon)$-approximation, improving upon the previous best guarantee of $(4+\epsilon)$. Moreover, given access to a suitable assignment subroutine, we achieve a $(2+\epsilon)$-approximation, matching the best known bound for the unconstrained problem. This, in turn, yields $(2+\epsilon)$ $\textsc{FPT}$-approximations for several important settings, including  $(t,k)$-fair, $(\alpha,\beta)$-fair, $\ell$-diversity, and private clustering.
\end{abstract}



	\newpage

\pagenumbering{arabic}


\section{Introduction}

Clustering with the sum of radii objective has emerged as a natural alternative to classical center-based formulations such as $k$-median, $k$-center, and $k$-means. The unconstrained sum of radii problem, abbreviated as \msr, admits a simple geometric interpretation: given a metric space $(X,d)$ on $n$ points, the goal is to place $k$ balls that cover all points while minimizing the sum of their radii. A particularly appealing feature of \msr is that it mitigates the \emph{dissection effect} on underlying ground-truth clusters—a phenomenon that commonly occurs in its closely related cousin, the $k$-center problem, which instead aims to minimize the maximum radius.

From a computational perspective, the approximability landscape of $k$-center is well understood: it admits a simple $2$-approximation, and this bound is known to be tight~\cite{GONZALEZ1985293,hochbaum1985best}. In contrast, despite more than two decades of study~\cite{CHARIKAR2004417,GibsonKKPV12,BehsazS15}, the polynomial-time approximability of \msr remains far less understood. The problem is known to be \np-hard and admits a QPTAS~\cite{GibsonKKPV12}, thereby suggesting the possibility of a PTAS. Recently, \cite{BuchemERW24} improved the best known polynomial-time approximation factor to $(3+\eps)$, improving upon earlier results~\cite{friggstad_et_al,CHARIKAR2004417}.

Over the past two decades, \fpt-approximation algorithms have achieved remarkable success for clustering problems, obtaining approximation factors significantly better than those achievable in polynomial time~\cite{badoiu-etal:approximate-clustering-coresets, kumar2010linear, cohenaddad_et_al:LIPIcs.ICALP.2019.42,
Cohen-AddadL19, AbbasiFOCS23}. These algorithms run in time $f(k)\poly(n)$, and since the parameter $k$ is often small compared to $n$, they are particularly appealing in practice. Moreover, this framework has proved effective even in constrained settings, including capacities, fairness, diversity, and outliers; see, for example,~\cite{GOYAL2023190,goyal2023tight,Cohen-AddadL19,
bhore2026clusteringconstraintsefficientparameterized,DBLP:conf/kdd/ThejaswiGOO22,DBLP:journals/corr/GT26, DBLP:journals/corr/DLP26} and references therein.

For $k$-center, the parameterized approximability landscape is again well understood: no $(2-\eps)$-approximation algorithm exists in \fpt\ time unless $\Wtwo = \fpt$. In sharp contrast, the parameterized complexity of unconstrained \msr remains poorly understood. On the algorithmic side, the best known \fpt\ approximation guarantee is $(2+\eps)$, due to Chen et al.~\cite{ChenXXZ24}, based on an extension of the classical greedy framework of~\cite{hochbaum1985best} for $k$-center. On the lower-bound side, Banerjee et al.~\cite{DBLP:conf/aaai/BanerjeeBGH25} ruled out $n^{o(k)}$-time algorithms under the exponential time hypothesis (\eth), although the precise parameterized hardness class of the problem remained unresolved.

At the same time, under standard complexity assumptions, \msr is known to admit a PTAS, which in turn implies the existence of a parameterized approximation scheme\footnote{Algorithms that run in time $f(k,\eps)n^{g(\eps)}$ and output a $(1+\eps)$-approximation for every $\eps>0$.} (PAS), even though no explicit algorithm is currently known. Thus, from the perspective of parameterized approximability, the main remaining question is whether \msr admits an \emph{efficient} parameterized approximation scheme (EPAS), namely a PAS whose running time is of the form $f(k,\eps)\poly(n)$.

\begin{question}\label{ques:one}
Does \msr admit an EPAS?
\end{question}

More broadly, while recent lower bounds establish parameterized intractability under \eth, the precise parameterized hardness class of the problem remained unknown.

Our first result resolves these questions. We show that \msr is \Wtwo-hard, thereby pinpointing its location in the \textsc{W}-hierarchy.\footnote{The decision version of \msr is in \Wtwo via a reduction to partitioned \scover: for radii $r_1,\dots,r_k$, construct $k$ families of balls of radius $r_i$ and select one from each family to cover all points.} In particular, this rules out exact \fpt\ algorithms under the standard assumption that $\Wtwo \neq \fpt$, which is strictly weaker than \eth. Moreover, we rule out EPAS, thereby answering \Cref{ques:one} in the negative and essentially completing the parameterized approximability landscape of \msr.

\begin{restatable}{theorem}{MSRHardness}\label{thm:hardness}
The \msr problem is \Wtwo-hard when parameterized by $k$. Furthermore:
\begin{itemize}
    \item (No EPAS) Unless $\Wtwo = \fpt$, there is no algorithm with running time $f(k,\eps)\cdot \poly(n)$ that computes a $(1+\eps)$-approximation for every $\eps > 0$.
    \item Assuming \eth, there is no algorithm with running time $f(k,\eps)\cdot n^{o(k)}$ that computes a $(1+\eps)$-approximation for every $\eps > 0$.
\end{itemize}
\end{restatable}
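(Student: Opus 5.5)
We reduce from \domset (or equivalently \scover) parameterized by solution size $k$. This problem is \Wtwo-complete, and under \eth it admits no $f(k)\cdot n^{o(k)}$-time algorithm. The reduction will be a parameter-preserving, polynomial-time map that creates a \emph{gap}: yes-instances produce \msr instances of cost at most some value $R$, while no-instances produce instances in which every solution costs more than $(1+\eps_0)R$. Here $\eps_0$ is allowed to depend on $k$, for instance $\eps_0=\Theta(1/k)$. This dependence suffices for all three claims. An algorithm running in time $f(k,\eps)\poly(n)$, called with $\eps<\eps_0(k)$, would decide \domset in time $f'(k)\poly(n)$, which gives $\Wtwo=\fpt$. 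Likewise, an $f(k,\eps)n^{o(k)}$ algorithm would contradict \eth. Exact \Wtwo-hardness is the special case in which the gap is simply ignored.

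\textbf{Construction.} Start from a \scover instance with universe $U$, family $\cF$ and budget $k$; we may assume every element lies in some set. Build a graph metric with a vertex for every element $u\in U$ and a vertex for every set $S\in\cF$, and place an edge of length $1$ between $S$ and each $u\in S$. Any two set-vertices are joined through a long path or given a shortcut distance of $2$; the choice is made so that a ball of radius $1$ centered at $S$ covers exactly $S$ together with its own vertex. Set-vertices must also be covered, and not all set-vertices lie in the solution. To handle this, give each set-vertex a private pendant, or attach all set-vertices to a common hub at distance $1$. In the second option the hub lies at distance $2$ from every element, so one ball of radius $1$ around the hub covers every set-vertex. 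The target then becomes $k+1$ balls of total radius $k+1$.

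In the yes case, we use $k$ unit balls at the chosen sets plus the unit hub ball, for a total of $k+1$. In the no case, the argument must show that any $k+1$ balls cost at least $k+1+\delta$. There are two ways a cheaper-looking solution could arise. First, radii could be redistributed: one large ball of radius $r\ge 2$ around the hub covers all elements, since they lie at distance $2$ from it. To block this, enlarge all element–set distances and choose scales so that a ball reaching every element through the hub costs more than $k+1$; equivalently, make the hub-to-set distances much smaller than the element–set distances, or replicate the gadget. Second, a ball could be centered at an element vertex, which covers several sets' elements at radius $2$. Here we must check that such balls are never cheaper than two unit balls, or else pass to a \domset-style metric in which centering at an element is dominated.

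\textbf{Main obstacle.} The key step is the soundness analysis. We must show that when no $k$ sets cover $U$, any $k+1$ balls have total radius at least $(1+\Omega(1/k))$ times the optimum. The radii are fractional and can be traded between balls, so one first argues, by rounding, that all radii may be assumed to lie in $\{0,1,2,\dots\}$ up to a constant loss, using the integrality of the graph distances. Next, charge each ball of radius $r$ to at most $r$ unit-ball ``set choices''. A ball of radius $r\ge 2$ centered near the hub must be shown to cover no more than what $\lceil r\rceil$ unit set-balls would cover. This is where the gadget must be tuned: for example, we can take the element–set distance to be $1$ and the hub–set distance to be $1$, but add $k+1$ disjoint copies of the element side, so that a large ball through the hub pays too much. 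If this step goes through, the lower bound $k+1+1$ in the no case gives a gap ratio $1+1/(k+1)$, which is exactly the $\eps_0=\Theta(1/k)$ required. The \eth bound then follows because the parameter grows only from $k$ to $k+1$ and the instance size stays polynomial.
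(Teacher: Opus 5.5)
Your overall framework is right and matches how the paper derives all three claims: a parameter-preserving reduction whose gap depends only on $k$, then running the assumed algorithm with $\eps$ below that gap. The gap is in the reduction itself. You never fix a construction for which soundness holds, and the concrete one you describe fails outright. With every set-vertex at distance $1$ from the hub $h$ and every element at distance $1$ from a set containing it, the single ball $\ball(h,2)$ covers the whole instance. So yes- and no-instances alike have optimum at most $2$, far below the target $k+1$.

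The repairs you list do not help. With element--set distance $L$ and hub--set distance $\delta$, one ball of radius $L+\delta$ still covers everything, while the intended solution costs $kL+\delta$. Extra copies of the element side attached to the same set-vertices or the same hub are absorbed by the same ball. More generally, a single ball of the $1$-center radius is always a solution, so that radius must be at least the no-threshold; your hub and your distance-$2$ shortcuts both keep it constant. For the same reason, the charging step ``a ball of radius $r$ covers no more than $\lceil r\rceil$ unit set-balls'' is false, not merely unproven: even without a hub, $\ball(u,2)$ around an element $u$ covers all of $\bigcup_{S\ni u}S$. The rounding step is also moot, since radii in an integer-weighted graph metric are already integers, and a genuine constant loss there would erase an additive gap of $1$.

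What is missing is a gadget that makes merging balls expensive and forces centers onto set-vertices. The paper gets both from three ingredients:
\begin{itemize}
\item It reduces from the \emph{partitioned} version of Set Cover, choosing one set from each of $\cC_1,\dots,\cC_k$.
\item Every edge incident to a set of $\cC_i$ gets weight $2^{i-1}$, so the yes-cost is $\sum_i 2^{i-1}=2^k-1$.
\item Each $\cC_i$ receives $k+1$ auxiliary vertices $A_i$ adjacent to all of its sets.
\end{itemize}
Let $\ell$ be the largest index such that no center in $\cC_\ell$ has radius at least $2^{\ell-1}$. Then some $a\in A_\ell$ is not a center, and every vertex outside $\cC_\ell\cup\{a\}$ is at distance at least $2^\ell$ from $a$. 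Together with the centers that maximality of $\ell$ guarantees in each $\cC_i$ for $i>\ell$, the cost is at least $2^\ell+\sum_{i>\ell}2^{i-1}=2^k$.

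Hence any solution of cost at most $2^k-1$ consists of exactly one center $c_i\in\cC_i$ of radius exactly $2^{i-1}$ per collection. This leaves no budget for element-centered or merged balls. The only elements within distance $2^{i-1}$ of $c_i$ are those of its own set, so the chosen sets cover $\cU$. The resulting gap is only $1+1/(2^k-1)$ rather than your hoped-for $1+\Theta(1/k)$. Since it depends on $k$ alone, calling the algorithm with $\eps=2^{-k}$ still yields both the no-EPAS and the \textsf{ETH} consequences.
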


To prove~\Cref{thm:hardness}, we reduce (partitioned) \scover to \msr such that, if there exists a set cover of size $k$, then the constructed \msr instance admits a solution of cost at most $2^k-1$, whereas if no such set cover exists, then every solution has cost at least $2^k$. Since \scover is \Wtwo-hard and the reduction preserves the parameter, this implies that \msr is \Wtwo-hard. Moreover, the gap between the yes and no instances is sufficient to rule out an EPAS under the assumption that $\Wtwo \neq \fpt$.

We next turn to the design of \fpt\ approximation algorithms for \msr. As discussed earlier, the \fpt\ framework has been highly successful not only in the unconstrained setting but also in a wide range of constrained variants. For \msr, recent work has developed \fpt\ approximation algorithms under capacities~\cite{DBLP:journals/corr/FG25,jaiswal_et_al:LIPIcs.ITCS.2024.65}, fairness constraints~\cite{ChenXXZ24,DBLP:conf/isaac/CartaDHR024,DBLP:conf/approx/BandyapadhyayC25,DBLP:conf/aaai/BanerjeeBGH25}, and outliers~\cite{DBLP:journals/corr/LGXZ25,gadekar2026fptapproximationsfairsum}.

While these constraints are well motivated by real-world applications, they introduce significant challenges in enforcing feasibility. More importantly, the rapidly growing variety of such constraints makes it increasingly difficult to design algorithms tailored to each individual setting. For instance, there are already more than a dozen notions of fairness constraints (see~\cite{FairClustering} and references therein), and new variants continue to emerge.

\paragraph*{Mergeable constraints.}
To address this challenge, Arutyunova and Schmidt~\cite{DBLP:conf/stacs/Arutyunova021} introduced a clean and general framework of \emph{mergeable constraints}. Informally, a constraint is said to be mergeable if merging any two clusters of a feasible solution results in another feasible (possibly larger) cluster. This abstraction captures a broad range of commonly studied constraints~\cite{DBLP:conf/isaac/CartaDHR024}, including the following:
\begin{itemize}[itemsep=1pt, topsep=1pt]
    \item \emph{Balanced Clustering}~\cite{DBLP:journals/corr/BFLS20}: Points are colored either red or blue, and each cluster must contain equal numbers of red and blue points.
    
    \item \emph{Pairwise Fair Clustering}~\cite{DBLP:conf/nips/Chierichetti0LV17}: Points are colored red or blue, and each cluster must maintain a bounded ratio between the number of blue and red points, lying within $[1/t, t]$. This is also known as $(t,k)$-fair clustering.
    
    \item \emph{Pairwise Exact with Multiple Colors}~\cite{DBLP:conf/icalp/Rosner018}: Points  have multiple colors, and each cluster must preserve the same proportion of each color.
    
    \item \emph{$\ell$-Diversity Clustering}~\cite{DBLP:journals/algorithmica/DingX20, DBLP:journals/jcss/BandyapadhyayFS24}: Points  have multiple colors, and no color  constitute more than a $1/\ell$ fraction of any cluster.
    
    \item \emph{Fair Representational}~\cite{DBLP:conf/nips/BeraCFN19,DBLP:conf/approx/Bercea0KKRS019}: Points  have multiple colors, and each color must lie within specified lower and upper bounds in every cluster.
    
    \item \emph{$(\alpha,\beta)$-Fair Clustering}~\cite{DBLP:conf/approx/Bercea0KKRS019, DBLP:journals/jcss/BandyapadhyayFS24}: A generalization of fair representational clustering in which a point may have multiple colors simultaneously. It captures all of the above constraints.
    
    \item \emph{Private Clustering}~\cite{DBLP:conf/icalp/Rosner018}: Each cluster must contain at least $L$ points; this is also known as clustering with a uniform lower bound.
\end{itemize}
\medskip

In their work, Carta et al.~\cite{DBLP:conf/isaac/CartaDHR024} studied mergeable constraints for \msr and designed a $(6+\eps)$ \fpt-approximation for \msr under mergeable constraints, yielding a unified guarantee across all these settings. They also obtained improved $(3+\eps)$ approximations for special cases such as balanced clustering and uniform lower bounds. Subsequently, Bandyapadhyay and Chen~\cite{DBLP:conf/approx/BandyapadhyayC25} improved the general factor to $(4+\eps)$, which is the current best known bound.

Our second main result significantly improves this factor to $(\tfrac{8}{3}+\eps)$.

\begin{restatable}{theorem}{MainApx}\label{thm:mainapx}
There is an algorithm for \msr under a mergeable constraint that, for every $\eps>0$, computes a $(\tfrac{8}{3}+\eps)$-approximation in time $2^{O(k \log (k/\eps))} \cdot \poly(n)$.
\end{restatable}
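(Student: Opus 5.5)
We follow the standard FPT template for sum of radii: guess the radii of an optimal (constrained) solution up to a $(1+\eps)$ factor, and then build a clustering greedily. The twist is to use mergeability to absorb the cost of fixing infeasible or overlapping clusters.

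\textbf{Guessing the radii.} Fix an optimal feasible clustering $C_1^*,\dots,C_k^*$ with radii $r_1^*\ge\dots\ge r_k^*$ and total cost $\opt$. First guess $r_1^*$; there are at most $n^2$ candidates. Every radius below $\eps r_1^*/k$ can be rounded up to $\eps r_1^*/k$, which adds at most $\eps\,\opt$ in total. Each remaining radius is rounded up to a power of $(1+\eps)$, so its value comes from a set of size $O(\log(k/\eps)/\eps)$. Hence the whole radius profile $(r_1,\dots,r_k)$ can be enumerated in $2^{O(k\log(k/\eps))}$ guesses. From now on we treat the profile as known.

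\textbf{Greedy ball selection.} Run the Chen et al.-style greedy procedure of~\cite{ChenXXZ24}. In each round, take a point $p$ not yet covered by the current enlarged balls, and branch over which optimal cluster $C_j^*$ contains $p$. This costs $k$ choices per round and at most $k$ rounds, so $k^{O(k)}$ branches overall. Then open the ball $B(p,2r_j)$, which covers all of $C_j^*$ because $p\in C_j^*$. In the unconstrained case this already gives a factor of $2$. Under a mergeable constraint, however, the balls $B(p,2r_j)$ cover points but do not give a feasible partition. If a point lies in several balls, assigning it arbitrarily can break feasibility of every part. Meanwhile, the true clusters $C_j^*$ may intersect several chosen balls.

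\textbf{Using mergeability.} We merge the optimal clusters into groups and pay once per group. Build an intersection graph on the guessed clusters whose centers ended up close to each other. Whenever two selected balls overlap, or a cluster that was never explicitly selected is hit by several balls, replace the corresponding optimal clusters by their union. By mergeability, the union of feasible clusters is feasible, so a partition of $X$ into unions of optimal clusters is always feasible. The cost of a merged group is the radius of a ball around one chosen center that covers the union. If the group is a connected chain of clusters centered at $p_{j_1},p_{j_2},\dots$, this radius is at most $r_{\max}$ plus twice the sum of the remaining radii in the group, in the worst case. The naive accounting gives factor $4$, as in~\cite{DBLP:conf/approx/BandyapadhyayC25}.

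\textbf{Balancing to reach $8/3$.} To get $8/3$, the plan is to compare two candidate solutions for each group and take the cheaper one. Option (a) merges the whole group into one ball, at cost about $r_{\max}+2\sum_{\text{others}} r$. Option (b) keeps the clusters separate with $2r_j$ balls, at cost $2\sum r$, but this option is only available when we can certify a feasible assignment. Such a certificate exists when the optimal clusters in the group are contained in disjoint balls, which we can guarantee by an appropriate choice of the greedy order, for instance by processing the largest guessed radius first and forbidding overlaps. Averaging the two options with weights $1/3$ and $2/3$ should give at most $\tfrac83\sum r$ for each group. Since we enumerate all branches, we can simply take the cheapest feasible output, and feasibility can be checked with a flow or assignment step over unions of guessed clusters.

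\textbf{Main obstacle.} The hard step is the charging argument in the last part. We must prove that every group admits either a merged ball of cost at most $\tfrac83\sum r$ or a feasible separated structure with that cost. This needs a careful case analysis on how a large cluster interacts with the smaller clusters it absorbs: whether the smaller clusters' centers lie within distance $r_{\max}+r_j$ of the large center, or further out. This is the point where the constant $8/3$ would be tuned.
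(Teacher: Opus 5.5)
Your radius guessing and greedy step match the paper's, but the two steps after that do not go through.

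\textbf{Feasibility of the merged groups.} Your merging rule (``a cluster that was never explicitly selected is hit by several balls'') refers to the unknown optimal clusters. The only executable version is to merge connected components of the intersection graph of the greedy balls $\ball(p,2r_j)$. That does not keep optimal clusters whole. An unselected cluster $\Pi^*_j$ can have its center and some points in a greedy ball $A$ and its remaining points in a greedy ball $B$ with $A\cap B=\emptyset$. The component partition then splits $\Pi^*_j$, and mergeability gives you no feasibility. The paper fixes this with one more round of guessing: each greedy ball is extended by the largest guessed radius $r'_t$ among the unselected optimal clusters whose center it contains. Every optimal cluster then lies inside a single ball, so components of the intersection graph are unions of optimal clusters. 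Each unselected radius is charged to only one ball, while selected radii were charged twice, so this feasible cover still costs at most $2\sum_i r'_i\le(2+\eps)\opt$.

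\textbf{The $\frac{8}{3}$ step.} This is the more serious gap, and it is the step you flag yourself. The inequality $\min(a,b)\le\frac{1}{3}a+\frac{2}{3}b$ presupposes that option (b) is available. Whenever it is, you would just output it and get factor $2$ for that group without any assignment subroutine, which the paper achieves only given an assignment oracle. Option (b) is not available in general, for three reasons.
\begin{itemize}
\item The greedy order cannot make the balls disjoint; the overlaps are dictated by the instance.
\item A group whose balls are pairwise disjoint is disconnected in the intersection graph unless it is a single ball, where (a) and (b) coincide.
\item For overlapping balls, splitting the overlap points consistently with the unknown optimal clusters is exactly the assignment problem. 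That is not available for a general mergeable constraint; the paper only assumes one can verify feasibility of a given clustering.
\end{itemize}
So only option (a) remains. With the center of the largest ball it gives $2\cR_C-R_{\max}$, which is up to factor $4$. The arithmetic $\frac{1}{3}\cdot 4+\frac{2}{3}\cdot 2=\frac{8}{3}$ hits the target but corresponds to no actual solution.

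\textbf{The missing idea.} You need a purely metric lemma with no alternative option: for any connected family of balls of total radius $\cR_C$, some point of the union is within $\frac{4}{3}\cR_C$ of every point of the union. The paper proves it as follows.
\begin{itemize}
\item If some ball has radius at least $\frac{2}{3}\cR_C$, its center works.
\item Otherwise, take a spanning tree of the component and a leaf-to-leaf path $\mu$ of maximum radius-sum. Let $i^*$ be the first index at which the weight of $\mu_1,\dots,\mu_{i^*}$, together with the subtrees hanging off them, reaches $\frac{2}{3}\cR_C$.
\item Pick a point of $\ball(\mu_{i^*-1})\cap\ball(\mu_{i^*})$ if the weight strictly before $i^*$ is at least $\frac{1}{3}\cR_C$, and the center of $\ball(\mu_{i^*})$ otherwise. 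Maximality of $\mu$ bounds the subtrees hanging off $\mu_{i^*}$.
\end{itemize}
Taking the $1$-center of each component of the feasible cover then gives total cost at most $\frac{4}{3}(2+\eps)\opt$.
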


At a high level, our algorithm follows the blueprint of Bandyapadhyay and Chen~\cite{DBLP:conf/approx/BandyapadhyayC25}, but introduces key new ingredients. Their approach first computes an unconstrained clustering, then expands the clusters, and finally merges them to obtain feasibility. In other words, it constructs an unconstrained solution and subsequently \emph{projects} it onto a feasible one, incurring losses in both steps.
In contrast, our algorithm operates entirely within the mergeable-constraints framework, thereby avoiding this projection step altogether. This structural difference, together with a more refined charging scheme and a careful analysis, leads to an improved approximation factor.

Our final result further improves the approximation to $(2+\eps)$, assuming access to an assignment subroutine for the underlying constraint. Such a subroutine takes a set of at most $k$ balls and assigns each point to the center of a ball containing it, while respecting the constraint, provided such an assignment exists.

\begin{restatable}{theorem}{TwoApx}\label{thm:2apx}
Suppose there exists an assignment subroutine $\cA_A$ with running time $T(\cA_A)$. 
Then there is an algorithm that computes a $(2+\eps)$-approximation for \msr under a mergeable constraint in time $2^{O(k \log (k/\eps))} \cdot n^2 \cdot T(\cA_A)$.
\end{restatable}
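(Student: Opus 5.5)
We follow the guess-and-greedy framework of Chen et al.~\cite{ChenXXZ24} for unconstrained \msr and use the assignment subroutine $\cA_A$ only as a final feasibility check. Fix an optimal feasible solution with clusters $O_1,\dots,O_k$, centers $c_1,\dots,c_k$ and radii $r_1\ge\dots\ge r_k$, and let $\opt=\sum_i r_i$.

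First we discretize the radii. We guess the largest radius $r_1$ exactly: it is a pairwise distance, so there are at most $n^2$ choices, which accounts for the $n^2$ factor. Radii smaller than $\eps r_1/k$ can be rounded up to $\eps r_1/k$ at total cost at most $\eps\,\opt$. Every other radius is rounded up to a multiple of $\eps r_1/k$, adding at most $\eps\,\opt$ more. After rounding, each radius lies in a set of $O(k/\eps)$ values, so the guessed vector $(\tilde r_1,\dots,\tilde r_k)$ has $(k/\eps)^{O(k)}=2^{O(k\log(k/\eps))}$ possibilities, and $\sum_i \tilde r_i\le(1+2\eps)\opt$.

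Second, for a fixed radius vector we run a branching greedy procedure in the spirit of Gonzalez. We maintain centers $u_1,\dots,u_j$, each labeled with a distinct optimal index $\sigma(j)$ and each satisfying $u_j\in O_{\sigma(j)}$. While some point is at distance more than $2\tilde r_{\sigma(i)}$ from every $u_i$, we pick such a point $p$ and branch over which still-unlabeled optimal cluster $O_\ell$ contains it; this gives at most $k$ branches per step and depth at most $k$, hence $k^k$ branches overall. The invariant $u_j\in O_{\sigma(j)}$ yields $O_{\sigma(j)}\subseteq \ball(u_j,2\tilde r_{\sigma(j)})$.

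Two points need checking here. The first is that $p$ really lies in an unlabeled cluster. If $p\in O_{\sigma(i)}$ for some labeled $i$, then $d(p,u_i)\le 2\tilde r_{\sigma(i)}$ by the triangle inequality through $c_{\sigma(i)}$, contradicting the choice of $p$. The second is termination. After at most $k$ steps no such $p$ exists, so the balls $\ball(u_i,2\tilde r_{\sigma(i)})$ cover $X$. In the correct branch every unlabeled optimal cluster must then be empty or covered anyway, so we may drop it or give it a zero-radius ball. The cost of this candidate ball set is at most $2\sum_i\tilde r_i\le(2+4\eps)\opt$, and rescaling $\eps$ gives the claimed factor.

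The main obstacle is feasibility: the balls contain the optimal clusters, but the points in their overlaps must be assigned so that the constraint holds. In the correct branch, assigning each $x\in O_\ell$ to $u_j$ with $\sigma(j)=\ell$ is valid, because $O_\ell\subseteq\ball(u_j,2\tilde r_\ell)$, and it induces exactly the optimal partition, which is feasible. If $O_\ell$ is unlabeled, the greedy loop never reached it, so $O_\ell$ is contained in the union of the labeled balls. To handle this case we apply mergeability: we merge $O_\ell$ into a labeled cluster. This is only valid if $O_\ell$ lies inside a single ball, so we guess that target index as part of the branching. If it does not lie inside a single ball, we instead add a ball around an arbitrary point of $O_\ell$ with radius $2\tilde r_\ell$, which keeps the cost bound. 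With either fix, a feasible assignment to these at most $k$ balls exists, and $\cA_A$ finds one. We run $\cA_A$ on every branch and output the cheapest feasible result. The total running time is $n^2\cdot 2^{O(k\log(k/\eps))}\cdot T(\cA_A)$.
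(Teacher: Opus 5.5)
There is a genuine gap in the repair step for a nonempty unlabeled cluster $O_\ell$ that does not lie inside a single labeled ball. Your discretization (guessing $r_1$ and rounding) and your branching greedy are essentially the paper's Theorem~\ref{thm:epsapxrad} and Theorem~\ref{thm:gbc}, and the cost bound $2\sum_i\tilde r_i$ is fine. The problem is the step where you ``add a ball around an arbitrary point of $O_\ell$ with radius $2\tilde r_\ell$'': the algorithm has no way to produce such a point. The greedy loop stops precisely because every point is already covered by labeled balls, so no point is known to lie in $O_\ell$. The only way to get one is to guess it among $n$ candidates, separately for each of up to $k-1$ such clusters. That gives $n^{\Theta(k)}$ branches and breaks the claimed $2^{O(k\log(k/\eps))}\cdot n^2\cdot T(\cA_A)$ bound. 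Guessing the merge target in your other case is harmless but unnecessary, because $\cA_A$ already searches over all assignments that respect ball membership.

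The missing idea is to enlarge an existing labeled ball instead of opening a new one. The labeled balls cover $X$, so the optimal center $c_\ell$ of $O_\ell$ lies in some labeled ball $\ball(u_j,2\tilde r_{\sigma(j)})$. By the triangle inequality through $c_\ell$, $O_\ell\subseteq \ball(u_j,2\tilde r_{\sigma(j)}+\tilde r_\ell)$. Assign each unlabeled index to one labeled ball containing its center. For every labeled ball, guess $t_j\in\{0\}\cup[k]$, the index of maximum $\tilde r$ among the unlabeled clusters assigned to it, and enlarge that ball by $\tilde r_{t_j}$. This swallows all of those clusters at once and adds only $(k+1)^k$ branches. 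The assignment sets are disjoint, so the $t_j$ are distinct unlabeled indices and
$\sum_j \tilde r_{t_j}\le\sum_{\text{unlabeled}}\tilde r_i$. The total cost is therefore at most $2\sum_{\text{labeled}}\tilde r_i+\sum_{\text{unlabeled}}\tilde r_i\le 2\sum_i\tilde r_i$. (Enlarging by $2\tilde r_\ell$ the labeled ball that contains any point of $O_\ell$ would also fit your budget, and it likewise needs only a guess of which ball, not of the point.) Now every optimal cluster lies inside a single ball. Sending each whole cluster to one such ball yields a partition into unions of optimal clusters, which is feasible by mergeability, so $\cA_A$ succeeds with cost at most that of the balls. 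This is the paper's feasible-cover step, Theorem~\ref{thm:feascover} together with Observation~\ref{obs:fc}.
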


This result matches the best known \fpt\ approximation factor for unconstrained \msr~\cite{ChenXXZ24}, where the assignment step is trivial. Moreover, it improves the best known guarantees for \abfair, which captures a broad class of fair clustering problems, improving upon the $(3+\eps)$ bound of~\cite{DBLP:conf/isaac/CartaDHR024} for several settings. To this end, we design an \fpt-time assignment subroutine for $(\alpha,\beta)$-fair clustering based on ideas from~\cite{DBLP:journals/jcss/BandyapadhyayFS24}.

\begin{corollary}\label{cor:main}
There is an algorithm for \abfair that, for every $\eps>0$, computes a $(2+\eps)$-approximation in time $2^{O(k \log (k/\eps))} \cdot \poly(n)$. In particular, this yields $(2+\eps)$-approximation algorithms for \msr under (i) fair-representational, (ii) $\ell$-diversity, (iii) pairwise exact with multiple colors, and (iv) balanced constraints.
\end{corollary}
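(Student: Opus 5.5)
The corollary follows from \Cref{thm:2apx} once we have an \fpt-time assignment subroutine $\cA_A$ for $(\alpha,\beta)$-fair clustering. The subroutine must handle the following task: given at most $k$ balls $B_1,\dots,B_{k'}$ with centers $c_1,\dots,c_{k'}$, assign every point to the center of some ball containing it so that every nonempty cluster satisfies the $(\alpha,\beta)$-fairness bounds, or report that no such assignment exists. We also have to check that $(\alpha,\beta)$-fairness is mergeable. Fix a color $h$. If two clusters $C,C'$ each satisfy $\alpha_h|C|\le |C\cap X_h|\le \beta_h|C|$, then adding the two inequalities gives the same bounds for $C\cup C'$. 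This holds even when a point carries several colors, because each color is counted separately. Hence the constraint is mergeable, and (i)--(iv) are special cases obtained by choosing the color classes and the vectors $\alpha,\beta$ appropriately.

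To build the subroutine, we first group points by their \emph{type}: the pair consisting of the set of balls that contain the point and the set of colors it carries. Points of the same type are interchangeable for the assignment problem. The difficulty is that the number of color combinations may be large, so we cannot simply enumerate types. Following~\cite{DBLP:journals/jcss/BandyapadhyayFS24}, we therefore write the problem as an integer program. For each point $p$ and each ball $i$ containing $p$ there is a variable $x_{p,i}\in\{0,1\}$. We require $\sum_i x_{p,i}=1$ for every $p$. For every ball $i$ and color $h$ we require
\[
\alpha_h\sum_p x_{p,i}\le \sum_{p\in X_h}x_{p,i}\le \beta_h\sum_p x_{p,i}.
\]
Next we guess which of the at most $k$ clusters are nonempty, using $2^k$ guesses. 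We then group the points by their ball-membership pattern. There are at most $2^k$ such patterns, but within a pattern the points still differ in color. We handle this by taking the LP relaxation of the transportation-like system above. Its constraint matrix has the form of a flow structure plus $O(k\cdot\#\text{colors})$ side constraints. A vertex solution therefore has at most $O(k\cdot\#\text{colors})$ fractional points, and these can be rounded after a bounded search.

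The rounding is the step I expect to be the main obstacle. When the number of colors is not bounded by a function of $k$, rounding the fractional points can break fairness. To deal with this I would follow the technique of~\cite{DBLP:journals/jcss/BandyapadhyayFS24}. Their approach reduces fair assignment with $k$ centers to an ILP whose number of variables depends only on $k$, using ball-membership patterns and the fact that fairness constraints are homogeneous linear inequalities. The ILP is then solved by Lenstra-type algorithms in time $f(k)\poly(n)$. If the bounds in \Cref{thm:2apx} have to be met exactly, I would settle for the $f(k)\poly(n)$ running time, which is absorbed into $T(\cA_A)$.

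Plugging this $T(\cA_A)=2^{O(k\log k)}\poly(n)$, or whatever $f(k)\poly(n)$ bound the subroutine achieves, into \Cref{thm:2apx} gives a $(2+\eps)$-approximation in time $2^{O(k\log(k/\eps))}\poly(n)$. This also requires that the subroutine's $k$-dependence is at most $2^{O(k\log k)}$, which I would verify from the ILP dimension.
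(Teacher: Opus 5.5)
Your skeleton matches the paper. The corollary is \Cref{thm:2apx} plus an \fpt\ assignment subroutine for $(\alpha,\beta)$-fair clustering, and your mergeability argument and the reductions for (i)--(iv) are fine. The gap is in the subroutine itself, which is the only new ingredient the corollary needs, and which you leave unresolved.

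Your first idea (an LP vertex with $O(k\cdot\#\text{colors})$ fractional points, then a bounded search) is not a valid decision procedure. The integral part fixed by one particular vertex need not extend to any fair integral assignment even when one exists. A failed search therefore does not certify infeasibility, and \Cref{def:assgn} requires that certification.

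Your fallback is an ILP indexed by ball-membership patterns. It has up to about $k\cdot 2^k$ integer variables per color class, so Lenstra-type algorithms give a $2^{2^{O(k)}}$ dependence. For this ILP your step ``verify $2^{O(k\log k)}$ from the ILP dimension'' fails, and the claimed $2^{O(k\log(k/\eps))}\poly(n)$ bound does not follow.

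You are also fighting the wrong obstacle. In this setting the number of colors $\ell$ is a constant, so the number of color classes $\Gamma\le 2^\ell$ is constant. What is too numerous to enumerate is the ball-membership patterns, not the color combinations.

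The missing idea, which the paper takes from~\cite{DBLP:journals/jcss/BandyapadhyayFS24}, is a mixed-integer program with only $k\Gamma$ integer variables.
\begin{itemize}
\item The per-point assignment variables $f_{p,i}$, $i\in\gamma(p)$, stay continuous. Only the class loads $L_{\tau,i}=\sum_{p\in X^\tau\cap B_i} f_{p,i}$ are required to be integral.
\item Fairness of cluster $i$ is a homogeneous linear condition on $(L_{\tau,i})_{\tau}$ alone, because the number of color-$c$ points in it is $\sum_{\tau\ni c}L_{\tau,i}$.
\item With $k\Gamma$ integer variables, Lenstra/Kannan solves the program in $(k\Gamma)^{O(k\Gamma)}\poly(n)=2^{O(k\log k)}\poly(n)$ time. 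Any fair integral assignment is a solution, so infeasibility of the program correctly certifies that none exists.
\item To round a solution, fix a class $\tau$ and consider the bipartite network from $X^\tau$ to the balls, with capacity $L_{\tau,i}$ at ball $i$. The fractional $f$ is a flow of value $|X^\tau|$ in it. By flow integrality there is an integral assignment realizing exactly the same loads, and hence a fair clustering.
\end{itemize}
Plugging this $T(\cA_A)$ into \Cref{thm:2apx} gives the stated running time.
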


Using an assignment subroutine from~\cite{DBLP:conf/isaac/CartaDHR024}, we also obtained improved approximation factor for private clustering, improving the previous factor of $(3+\eps)$~\cite{DBLP:conf/isaac/CartaDHR024}.
\begin{corollary}\label{cor:second}
    There exists an FPT $(2+\eps)$-approximation algorithm for private \msr running in time $2^{O(k \log (k/\eps))} \cdot \poly(n)$.
\end{corollary}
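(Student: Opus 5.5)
The plan is to obtain \Cref{cor:second} directly from \Cref{thm:2apx}. It suffices to check two things: private clustering (uniform lower bound $L$ on every cluster size) is a mergeable constraint, and it admits an assignment subroutine running in time $f(k)\cdot\poly(n)$.

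Mergeability is immediate. If two clusters each contain at least $L$ points, their union also contains at least $L$ points, so merging clusters of a feasible solution preserves feasibility.

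The assignment problem is the following. Given at most $k$ balls $B_1,\dots,B_{k'}$ with centers $c_1,\dots,c_{k'}$, we must assign every point to the center of some ball containing it so that every center receives at least $L$ points, or report that no such assignment exists. I would model this as a flow problem with lower bounds.
\begin{itemize}
\item Create a source $s$ with an arc of capacity $1$ to each point $p$.
\item Add an arc from $p$ to center $c_i$ whenever $p\in B_i$.
\item Add an arc from $c_i$ to a sink $t$ with lower bound $L$ and capacity $n$.
\end{itemize}
A feasible integral flow saturating all source arcs corresponds exactly to a valid assignment. By flow integrality and the standard reduction of lower-bounded flow to max-flow, this is decidable in polynomial time. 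This matches the subroutine of \cite{DBLP:conf/isaac/CartaDHR024} that the paper says it uses, so I would simply cite it, giving $T(\cA_A)=\poly(n)$.

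There is one subtlety. A solution may open fewer than $k$ clusters, and an empty ball should not be forced to receive $L$ points. I would handle this by having the subroutine guess the subset of balls that are actually used, which costs an extra factor of $2^k$. Alternatively, one can rely on the algorithm of \Cref{thm:2apx} already enumerating over the number of balls.

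Plugging $T(\cA_A)=2^k\poly(n)$ into \Cref{thm:2apx} gives running time $2^{O(k\log(k/\eps))}\cdot n^2\cdot 2^k\poly(n)=2^{O(k\log(k/\eps))}\poly(n)$ and approximation factor $2+\eps$.

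The only real point needing care is whether the notion of assignment subroutine in \Cref{thm:2apx} requires every given ball to be nonempty. The guessing step above resolves this either way.
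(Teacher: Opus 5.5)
Your proposal is correct and follows the paper's route. The paper also obtains \Cref{cor:second} by plugging the polynomial-time flow-based assignment subroutine for uniform lower bounds from \cite{DBLP:conf/isaac/CartaDHR024} into \Cref{thm:2apx}. Your extra $2^k$ guess of which balls are actually opened is a harmless safeguard against the empty-cluster issue, which the paper does not discuss, and it keeps the running time within $2^{O(k\log(k/\eps))}\cdot\poly(n)$.
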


Our algorithmic results, along with the current best approximation factors, are summarized in~\Cref{tab:tab1}.

\begin{table}[t]
\centering
\renewcommand{\arraystretch}{1.35}
\setlength{\tabcolsep}{10pt}
\begin{tabular}{lccc}
\toprule
\textbf{Setting} & \textbf{Best Known} & \textbf{Our Result} & \textbf{Reference} \\
\midrule
\textit{Mergeable constraints} 
& $(4+\eps)$~\cite{DBLP:conf/approx/BandyapadhyayC25} 
& \textbf{$(\tfrac{8}{3}+\eps)$} 
& \Cref{thm:mainapx} \\

\textit{$(\alpha,\beta)$-fair} 
& $(4+\eps)$~\cite{DBLP:conf/approx/BandyapadhyayC25} 
& \textbf{$(2+\eps)$} 
& \Cref{thm:2apx} \\

\textit{Balanced} 
& $(3+\eps)$~\cite{DBLP:conf/isaac/CartaDHR024} 
& \textbf{$(2+\eps)$} 
& \Cref{cor:main} \\

\textit{Private} 
& $(3+\eps)$~\cite{DBLP:conf/isaac/CartaDHR024} 
& \textbf{$(2+\eps)$} 
& \Cref{cor:second} \\
\bottomrule
\end{tabular}
\caption{Improved \fpt\ approximation guarantees for \msr under mergeable constraints. }
\label{tab:tab1}
\end{table}







\paragraph*{Hardness implications and limitations of prior reductions.}
Both of our hardness results— \Wtwo-hardness and the non-existence of an EPAS—extend to the mergeable-constraint setting, since the unconstrained version trivially satisfies the mergeability property. A result of~\cite{DBLP:conf/approx/BandyapadhyayC25} claims to rule out algorithms with running time $f(k)\,n^{o(k)}$ for fair-representational \msr, which would in turn imply such hardness for the broader mergeable-constraint setting. 
However, we observe that the reduction in~\cite{DBLP:conf/approx/BandyapadhyayC25} contains a subtle flaw that invalidates the claimed hardness. In this light, our result not only restores the intended hardness conclusion but also strengthens it by ruling out an EPAS. See~\Cref{ss:hardissue} for more details.

\subsection{Further Related Work}

\paragraph*{Polynomial-time regime.}
Behsaz and Salavatipour~\cite{BehsazS15} showed that \msr can be solved exactly in polynomial time on unweighted graphs, provided that zero-radius clusters are disallowed. In Euclidean metrics of constant dimension, \cite{GibsonKKPV12} gave a polynomial-time exact algorithm, and showed that the problem is \np-hard in metrics of constant doubling dimension. They further established \np-hardness for shortest-path metrics induced by weighted planar graphs, while also presenting a QPTAS for general metrics. Moreover, even when the locations of the optimal centers are known, computing a valid assignment of points to centers remains \np-hard~\cite{GibsonKKPV12,DBLP:journals/corr/DHKS23}. For \msr under pairwise fair constraints, \cite{DBLP:conf/esa/NezhadBC25} recently obtained an $O(1)$-approximation in polynomial time.

\paragraph*{\fpt\ regime.} 
For (unconstrained) \msr, \cite{DBLP:conf/compgeom/BandyapadhyayL023a} showed an EPAS in Euclidean metrics. When parameterizing by the treewidth of the underlying metric graph, \cite{DBLP:conf/waoa/DrexlerHLSW23} obtained an \textsf{XP}-time algorithm. 

Chen et al.~\cite{ChenXXZ24} obtained a $(2+\eps)$-approximation for \msr in general metrics. Their approach can be viewed as an extension of the classical algorithm of~\cite{hochbaum1985best} for $k$-center: repeatedly selecting a point among the remaining points, guessing the radius of its optimal cluster, and removing all points within twice that radius. They also extend this approach to fair and matroid constraints, incurring an additional loss in the approximation factor.

Under capacity constraints, the best known \fpt-approximation factor is due to~\cite{DBLP:journals/corr/FG25}. Liu et al.~\cite{DBLP:journals/corr/LGXZ25} studied an extension of \msr with outliers, called colorful \msr, which imposes separate budgets on the number of outliers for each color class, and obtained a $(7+\eps)$ \fpt-approximation. Recently, \cite{gadekar2026fptapproximationsfairsum} considered \msr under simultaneous fairness (on centers) and outlier constraints, and presented a $(3+\eps)$ \fpt-approximation that extends to objectives defined by monotone symmetric norms of radii, beyond the sum-of-radii objective.

For mergeable constraints, \cite{DBLP:conf/waoa/DrexlerHLSW23} designed an EPAS for \msr in Euclidean metrics of arbitrary dimension. This was subsequently extended to metrics of constant doubling dimension by \cite{DBLP:conf/aaai/BanerjeeBGH25}.

\paragraph*{Organization.}
We begin with preliminaries and problem definitions in~\Cref{sec:prelim}. In~\Cref{sec:algo}, we present our algorithms, including a $(2+\eps)$-approximation in~\Cref{ss:2apx}, a simple $(4+\eps)$-approximation in~\Cref{ss:4apx} matching the previous best known factor, and our main $(\tfrac{8}{3}+\eps)$-approximation algorithm in~\Cref{ss:mainsec}. In~\Cref{ss:assgn}, we present our assignment subroutines. Finally, in~\Cref{sec:hard}, we prove our hardness results.

\section{Preliminaries}\label{sec:prelim}
\paragraph*{Notations.}
Given a metric space $(X,d)$, for $x \in X$ and $r \in \mathbb{R}_{\geq 0}$, we denote by $\ball(x,r)$ as the ball centered at $x$ with radius $r$, i.e., $\ball(x,r):= \{p\in X: d(x,p)\le r\}$. 
For a ball $B=\ball(p,R)$ and a radius $r_t$, let $\Ext(B,r_t)$ denote  $\ball(p,R+r_t)$.

We now define the problems of our interest.
\begin{definition}[\sor]
   Given a metric space \met on $n$ points and a non-negative integer $k$, the task is to find a set $C \subseteq X$ of size at most $k$ and an assignment function $\sigma: X \rightarrow C$ such that $\sum_{c \in C} \rad(c)$ is minimized, where for $c\in C$,  $\rad(c):=\max_{p \in \sigma^{-1}(c)} d(p,c)$, is the radius of cluster centered at $c$.
\end{definition}
For brevity, we use \msr to mean \sor.
We denote by $\cI=\msrins$ an instance of \msr.
We call the pair $(C,\sigma)$ as a \emph{solution}  to $\cI$ or \emph{clustering} of $\cI$, and $\sigma^{-1}(c)$ as the \emph{cluster} centered at $c$. Furthermore, we define cost of the solution as $\cost_{\cI}((C,\sigma)) = \sum_{c \in C}\rad(c)$, the sum of radii of the clusters given by $\sigma^{-1}$. Given a clustering $(C,\sigma)$ of $\cI$, for $c_i,c_j \in C$, we \emph{merge} two clusters $\Pi_i:= \sigma^{-1}(c_i)$ and $\Pi_j: \sigma^{-1}(c_j)$ by setting $\sigma(p)=c$ for all points $p \in \Pi_i \cup \Pi_j$ and some $c \in X$. Finally, a solution $(O,\sigma^*)$ to $\cI$ with minimum cost is an \emph{optimal} solution or clustering, and denote its cost by $\opt(\cI)$ or $\opt$, when $\cI$ is clear from the context. Additionally, let $R^*$ denote the set of radii corresponding to the optimal solution $(O,\sigma^*)$.

As mentioned in the introduction, \msr has been extensively studied under several constraints, including capacities, outliers, and fairness. We say a clustering is \emph{feasible} w.r.t. some constraint $\cD$ if it satisfies all the constraints specified by $\cD$.
We say a constraint $\cD$ is \emph{mergeable} if for any feasible clustering w.r.t. $\cD$, if  merging any two of its clusters results in a clustering that is feasible w.r.t. $\cD$.
We consider the \msr problem under mergeable constraints.
Similar to prior works~\cite{DBLP:conf/isaac/CartaDHR024,DBLP:conf/approx/BandyapadhyayC25}, we assume that, given a candidate clustering, one can efficiently verify whether it satisfies the mergeable constraint.
Next, we define assignment subroutine for \msr under mergeable constraints.

\begin{definition}[Assignment subroutine]\label{def:assgn}
Let $\cI$ be an instance of \msr under a mergeable constraint. 
An algorithm $\cA_A$ is called an \emph{assignment subroutine} for $\cI$ if, given a set of balls 
$\cB=\{B_1=\ball(c_1,R_1),\dots, B_k=\ball(c_k,R_k)\}$, it either

\begin{itemize}
    \item returns a feasible clustering $(C=\{c_1,\dots,c_k\},\sigma)$ such that 
    $\sigma(p) \in \{c_i \in C : p \in B_i\}$ for all $p \in X$, or
    \item correctly reports that no such feasible clustering exists.
\end{itemize}
\end{definition}
Note that any clustering returned by an assignment subroutine on $\cB$ has cost at most $\cost(\cB)$.

A \emph{radius profile} is a vector $R=(r_1,\dots,r_\ell)$ of non-negative reals, 
where $\ell \le k$. For a given instance $\cI$ of (constrained) \msr, 
the radius profile of a clustering $(C,\sigma)$ is the vector of the radii 
of its clusters, under some fixed ordering of the clusters.

We say that $R^*$ is an \emph{optimal radius profile} for $\cI$ if there exists 
an optimal solution to $\cI$ whose radius profile is $R^*$.

We say that a radius profile $R=(r_1,\dots,r_\ell)$ is an 
\emph{$\eps$-approximation} of an optimal radius profile 
$R^*=(r^*_1,\dots,r^*_\ell)$ if
$
r^*_i \le r_i \quad \text{for all } i \in [\ell], 
 \text{and} 
\sum_{i=1}^\ell r_i \le (1+\eps)\sum_{i=1}^\ell r^*_i.
$
We assume that $R$ and $R^*$ have the same length $\ell$, corresponding to the 
same number of clusters under a fixed ordering.

It is known that for any instance of (constrained) \msr, one can compute a list 
of radius profiles that contains an $\eps$-approximation of every optimal radius profile.

\begin{theorem}[\cite{jaiswal_et_al:LIPIcs.ITCS.2024.65,DBLP:journals/corr/FG25}]\label{thm:epsapxrad}
There is an algorithm that, for any instance of (constrained) \msr and any $\eps>0$, 
computes a list $\cL$ of radius profiles of size $2^{O(k \log (k/\eps))} n^2$ such that 
$\cL$ contains an $\eps$-approximation of every optimal radius profile for $\cI$. 
Moreover, the algorithm runs in time $2^{O(k \log (k/\eps))} n^2$.
\end{theorem}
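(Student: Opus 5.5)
The plan is a guessing-and-rounding argument. It uses only the fact that every optimal radius is a pairwise distance, so it applies to any constraint, mergeable or not.

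First, I observe that for any solution $(C,\sigma)$ and any $c\in C$, $\rad(c)=\max_{p\in\sigma^{-1}(c)} d(p,c)$ is either $0$ or a distance $d(p,c)$ between two points of $X$. So every entry of an optimal radius profile $R^*=(r^*_1,\dots,r^*_\ell)$ lies in the set $D=\{d(x,y): x,y\in X\}$, which has at most $n^2$ values. This holds regardless of the constraint $\cD$, because the constraint restricts only the assignment $\sigma$, not how radii are defined. The algorithm enumerates every $\ell\in[k]$ and every candidate value $r_{\max}\in D$ for the largest entry $\max_i r^*_i$. This gives $k n^2$ outer guesses, and for the correct one $r_{\max}=\max_i r^*_i\le \opt$.

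Next, for a fixed guess I define a small grid of rounded radii. Let $\delta=\eps r_{\max}/k$, and let $G=\{\delta\}\cup\{\delta(1+\eps)^j : j\ge 1,\ \delta(1+\eps)^{j-1}< r_{\max}\}$. Then $|G|=O(\log(k/\eps)/\eps)$, and the largest element of $G$ is at least $r_{\max}$. The list $\cL$ consists of all vectors in $G^\ell$, over all outer guesses. Its size is $k n^2\cdot O(\log(k/\eps)/\eps)^k = 2^{O(k\log(k/\eps))}n^2$, where the factor $k$ and the $1/\eps$ are absorbed into the exponent. Writing the list out takes the same time up to polynomial factors in $k$, which are also absorbed.

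To verify correctness, fix an optimal profile $R^*$ and the correct guesses $\ell$ and $r_{\max}$. For each $i$, define $r_i$ to be the smallest element of $G$ that is at least $r^*_i$; it exists because $r^*_i\le r_{\max}$. Then $r_i\ge r^*_i$. If $r^*_i\le\delta$, then $r_i=\delta$, so $r_i\le r^*_i+\delta$. Otherwise $r_i\le(1+\eps)r^*_i$ by the geometric spacing of $G$. Summing over $i$ gives
\[
\textstyle\sum_i r_i\le(1+\eps)\sum_i r^*_i+\ell\delta\le(1+\eps)\sum_i r^*_i+\eps r_{\max}\le(1+2\eps)\sum_i r^*_i .
\]
Running the construction with $\eps/2$ in place of $\eps$ gives the claimed guarantee. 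Since $\cL$ contains every vector in $G^\ell$, it contains this particular $R=(r_1,\dots,r_\ell)$ in the same order as $R^*$, so the fixed-ordering convention causes no difficulty. Moreover, the construction of $\cL$ does not depend on which optimal solution is considered, so a single list works for every optimal radius profile simultaneously.

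The main obstacle is the additive loss from small radii. A purely multiplicative geometric grid cannot reach down to radii near $0$ without depending on the aspect ratio of the metric, which would make $|G|$ too large. Guessing $r_{\max}$ exactly, at a cost of only $n^2$ guesses, bounds this loss by $k\cdot\eps r_{\max}/k=\eps r_{\max}\le\eps\,\opt$. This choice is what keeps the list size independent of the aspect ratio, giving the bound $2^{O(k\log(k/\eps))}n^2$.
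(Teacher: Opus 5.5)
Your proposal is correct. It follows essentially the same route as the cited works (the paper only cites this result and gives no proof of its own): guess the largest optimal radius exactly among the at most $n^2$ pairwise distances, then round every radius up to a $(1+\eps)$-geometric grid on $[\eps r_{\max}/k, r_{\max}]$, with radii below $\eps r_{\max}/k$ rounded up to that threshold so that the total additive loss is at most $\eps r_{\max}\le \eps\,\opt$, and your handling of the edge cases (zero or empty-cluster radii, the fixed ordering via enumerating all of $G^\ell$, and independence of the list from the chosen optimal solution) is sound.
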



     

\section{Improved FPT Approximations}\label{sec:algo}
In this section, we present our \fpt-approximation algorithms for \msr under mergeable constraints. First, we setup the required definitions.

\paragraph*{Definitions.}
For the following definitions, we fix an instance $\cI=\msrins$ of \msr under a mergeable constraint.
Let $\Pi^* = \{\Pi_1^*,\dots,\Pi_k^*\}$ be the clusters of a fixed optimal clustering for $\cI$.

\begin{definition}[Ball Cover]\label{def:ballcover}
A set $\cB$ of center-disjoint balls  is a \emph{$k$-ball cover} for $\cI$ if $|\cB|\le k$ and $X \subseteq \bigcup_{B \in \cB} B$.
\end{definition}

\begin{definition}[$\Pi^*$-Respecting Cover]\label{def:pirespect}
A set $\cB$ of balls is a \emph{$\Pi^*$-respecting cover} if:
\begin{itemize}
    \item $\cB$ is a $k$-ball cover for $(X,d)$, and
    \item for every ball $B=\ball(p,R) \in \cB$, if $p \in \Pi_i^*$ for some $i \in [k]$, then
    $\Pi_i^* \subseteq B$.
\end{itemize}
\end{definition}

\begin{definition}[$\Pi^*$-Feasible Cover]\label{def:pifeas}
A set $\cB$ of balls is a \emph{$\Pi^*$-feasible cover} if:
\begin{itemize}
    \item $\cB$ is a $\Pi^*$-respecting cover, and
    \item for every $i \in [k]$, there exists a ball $B \in \cB$ such that
    $\Pi_i^* \subseteq B$.
\end{itemize}
A set $\cB$ of balls is a \emph{feasible cover} if there exists an optimal clustering $\Pi^*$ such that $\cB$ is a $\Pi^*$-feasible cover.
\end{definition}


\begin{figure}[t]
    \centering
    
    \begin{subfigure}{0.2\textwidth}
        \centering
        \includegraphics[width=\linewidth]{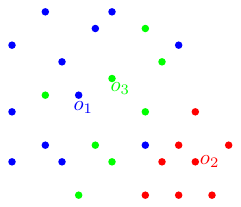}
    \end{subfigure}
    \begin{subfigure}{0.38\textwidth}
        \centering
        \includegraphics[width=\linewidth]{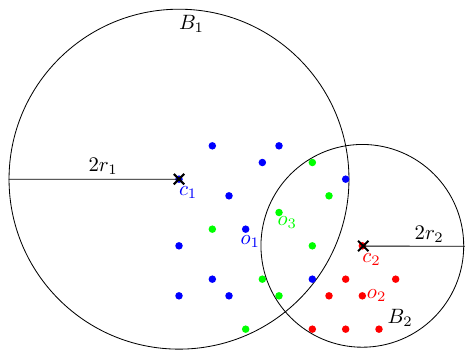}
    \end{subfigure}
    \begin{subfigure}{0.38\textwidth}
        \centering
        \includegraphics[width=\linewidth]{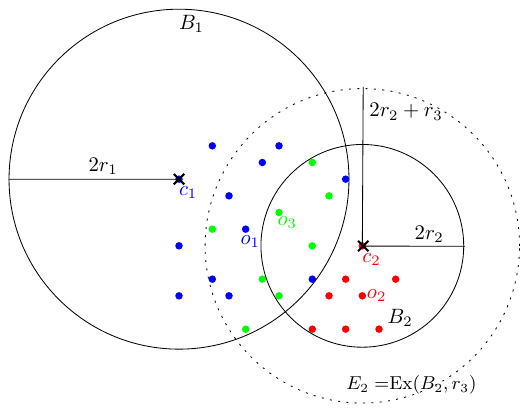}
    \end{subfigure}

\caption{\small 
The left figure shows an optimal clustering $\Pi^*$ with centers $o_1,o_2,o_3$ and radii $r_1,r_2,r_3$, where points of the same color belong to the same cluster. The middle figure shows a $\Pi^*$-respecting cover (obtained from GBC) consisting of  $B_1=\ball(c_1,2r_1)$ and $B_2=\ball(c_2,2r_2)$, with total cost $2(r_1+r_2)$. The right figure shows a $\Pi^*$-feasible cover $\{B_1,E_2\}$ (obtained from FC using $B_1,B_2$) with total cost $2r_1+2r_2+r_3 \le 2\opt$.
}
    \label{fig:covers}
\end{figure}
See~\Cref{fig:covers} for an illustration of these covers.
For a set of balls $\cB$, let $\cost(\cB)$ denote the sum of the radii of its balls. We begin with the following simple observation.

\begin{observation}\label{obs:fc}
    Let $\cF$ be a feasible cover for an instance $\cI$ of \msr under a mergeable constraint. Then there exists a feasible solution $(C,\sigma)$ to $\cI$ with cost at most $\cost(\cF)$, where $C$ is the set of centers of the balls in $\cF$.
\end{observation}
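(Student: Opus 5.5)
The plan is to build the solution explicitly from the optimal clustering and then use mergeability to make it feasible. Since $\cF$ is a feasible cover, fix an optimal clustering $\Pi^*=\{\Pi_1^*,\dots,\Pi_k^*\}$ for which $\cF$ is a $\Pi^*$-feasible cover. By \Cref{def:pifeas}, for each $i\in[k]$ there is a ball $B\in\cF$ with $\Pi_i^*\subseteq B$. For each $i$, choose one such ball and call it $B_{\tau(i)}$. This defines a map $\tau$ from optimal clusters to balls of $\cF$.

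Next, define the assignment. Write $\cF=\{B_1=\ball(c_1,R_1),\dots,B_m=\ball(c_m,R_m)\}$ with $m\le k$ and distinct centers (\Cref{def:ballcover}). For each ball $B_j$ in the image of $\tau$, let its cluster be $\bigcup_{i:\tau(i)=j}\Pi_i^*$, and set $\sigma(p)=c_j$ for every $p$ in it. Since $\Pi^*$ partitions $X$, $\sigma$ is a well-defined function on all of $X$. Take $C$ to be the set of centers in the image of $\tau$; this is a subset of the centers of $\cF$, which is enough for the statement.

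Feasibility is the one step needing care. The optimal clustering $\Pi^*$ is feasible with respect to $\cD$. The clustering $(C,\sigma)$ arises from it by repeatedly merging two clusters that share the same $\tau$-image. The definition of merging allows the merged cluster to be assigned to an arbitrary center $c\in X$, and in particular to $c_j$. When a group has a single member, the cluster is unchanged and only its center is relabeled to $c_j$. I would argue, as the paper's framework implicitly does, that feasibility depends only on the partition into clusters and not on the chosen center labels. Granting that, each merge step preserves feasibility by mergeability, and induction on the number of merges shows $(C,\sigma)$ is feasible.

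Cost follows directly. Every $p$ with $\sigma(p)=c_j$ lies in some $\Pi_i^*\subseteq B_j$, so $d(p,c_j)\le R_j$ and hence $\rad(c_j)\le R_j$. Summing over the centers used gives $\cost\le\sum_{j}R_j=\cost(\cF)$, with unused balls contributing nonnegatively. The main obstacle is the feasibility step: stating precisely that relabeling centers and merging preserve feasibility. Everything else is bookkeeping.
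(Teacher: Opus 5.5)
Your proposal is correct and follows the paper's proof. You fix the $\Pi^*$ witnessing feasibility of $\cF$, send each $\Pi_i^*$ to the center of a ball of $\cF$ containing it, get feasibility from mergeability since each new cluster is a union of optimal clusters, and get the cost bound from $\Pi_i^*\subseteq B_j$. The differences are cosmetic: you state explicitly the assumption, implicit in the paper, that feasibility depends only on the partition and not on center labels, and you keep only the centers actually used, whereas the paper takes all centers of $\cF$ and tacitly lets the unused ones carry empty clusters.
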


\begin{proof}
    Let $\Pi^*$ be an optimal clustering of $\cI$ such that $\cF$ is a $\Pi^*$-feasible cover. For each cluster $\Pi^*_i \in \Pi^*$, fix a ball $B_{j_i} \in \cF$ satisfying $\Pi^*_i \subseteq B_{j_i}$. Now, for every cluster $\Pi^*_i$, assign all its points  to the center of $B_{j_i}$.
    Consequently, each cluster in the resulting solution is a union of clusters from $\Pi^*$. Since the constraint is mergeable, all resulting clusters are feasible. Finally, the cost of the solution is at most $\cost(\cF)$, as desired.
\end{proof}

\paragraph*{Overview.}
Our first goal is to compute a low-cost feasible cover for an instance of \msr under a mergeable constraint. Towards this, in Section~\ref{ss:gbc}, we design an \fpt\ algorithm (GBC) that computes a ball cover $\cB$ that is $\Pi^*$-respecting, with cost at most $(2+\eps)$ times the optimum for some optimal solution $\Pi^*$. Using $\cB$, we then present in Section~\ref{ss:fc} an \fpt\ algorithm (FC) that expands the balls in $\cB$ to obtain a $\Pi^*$-feasible cover $\cF$ that still has cost $(2+\eps)$ times the optimum.

Building on this feasible cover, we derive a $(2+\eps)$-approximation using an assignment subroutine and a $(4+\eps)$-approximation without such a subroutine, presented in Sections~\ref{ss:2apx} and~\ref{ss:4apx}, respectively. Finally, in Section~\ref{ss:mainsec}, we show how to leverage $\cF$ to obtain an $(\tfrac{8}{3}+\eps)$-approximation.
The derivation of our results and the interplay between the key theorems are illustrated in~\Cref{fig:algoflow}.




\begin{figure}
	\centering
	\includegraphics[width=0.85\linewidth]{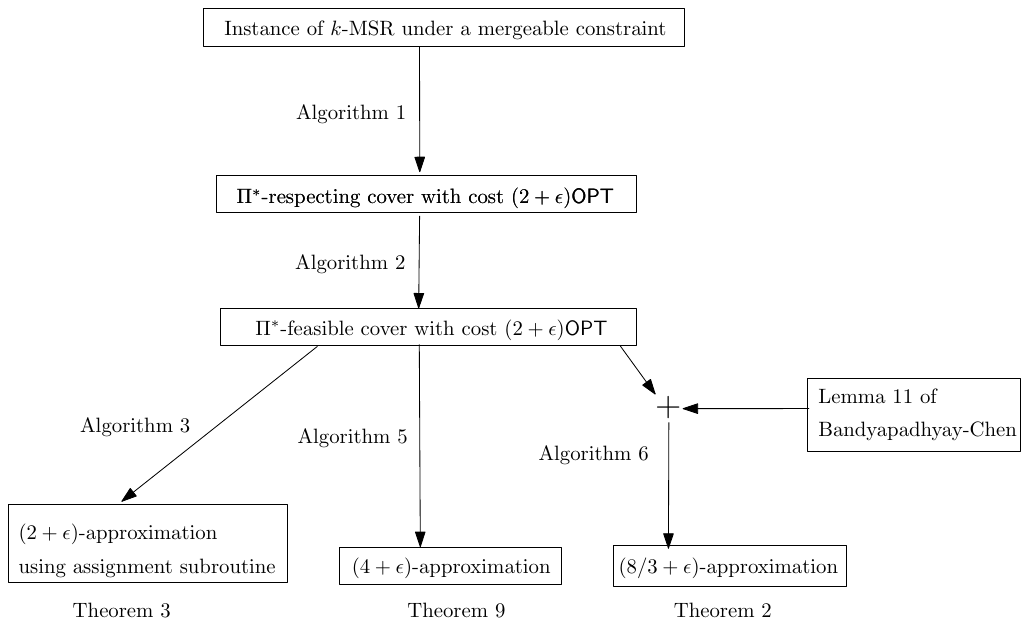}
	\caption{\small Derivation of our algorithmic results.
	}
	\label{fig:algoflow}
\end{figure}

\subsection{Greedy Ball Cover}\label{ss:gbc}
In this section, we present a simple greedy algorithm that, for any $\eps>0$, computes a ball cover for any given instance of \msr under mergeable constraints whose cost is at most $(2+\eps)$ times the optimal cost. The algorithm runs in time \gbctime.
\begin{theorem}[Greedy Ball Cover]\label{thm:gbc}
There exists an algorithm that, given an instance $\cI$ of \msr under mergeable constraints and a radius profile $R'$, 
runs in time \gbctime\ and returns a family $\mathbb{B}$ of $k$-ball covers such that the following holds:

For every $\eps > 0$ and every optimal clustering $\Pi^*$ for $\cI$ with radius profile $R^*$, 
if $R'$ is an $\eps$-approximation of $R^*$, then there exists a ball cover $\cB \in \mathbb{B}$ such that $\cB$ is a $\Pi^*$-respecting cover and
$\cost(\cB) \le (2+\eps)\opt(\cI)$.
\end{theorem}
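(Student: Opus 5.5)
The plan is to adapt the greedy algorithm of Chen et al.~\cite{ChenXXZ24}, an extension of the Hochbaum--Shmoys $k$-center greedy, into a bounded-depth branching procedure. Write $R'=(r'_1,\dots,r'_\ell)$. The algorithm keeps a set $U$ of uncovered points, initially $U=X$, and a partial ball family $\cB$, initially empty. While $U\neq\emptyset$ and fewer than $\ell$ balls have been placed, it picks an arbitrary point $p\in U$. It then branches over every index $j\in[\ell]$ not yet used, adds $\ball(p,2r'_j)$ to $\cB$, and removes the points of this ball from $U$. A branch is recorded in $\mathbb{B}$ only if it ends with $U=\emptyset$ and at most $k$ balls. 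Balls in a branch have distinct centers, because each new center is chosen from $U$ and earlier centers have already been removed. So every recorded family is a $k$-ball cover. The branching tree has depth at most $\ell\le k$ and fan-out at most $k$, so there are at most $k^k=2^{O(k\log k)}$ leaves. Each step costs $O(n)$ time if distances are available, giving total time $2^{O(k\log k)}n$.

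For correctness, fix $\Pi^*$ with radius profile $R^*$, where $R'$ is an $\eps$-approximation of $R^*$. Index the clusters so that $r'_j\ge r^*_j=\rad(\Pi^*_j)$. Follow the branch that always guesses correctly. When $p$ is picked, it lies in a unique cluster $\Pi^*_j$, and that branch chooses index $j$. I maintain the invariant that whenever a point of $\Pi^*_j$ is picked, no earlier step has chosen index $j$. This guarantees the correct choice is always among the available indices.

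The key geometric step is this. If $p\in\Pi^*_j$ with optimal center $o_j$, then every $q\in\Pi^*_j$ satisfies $d(p,q)\le d(p,o_j)+d(o_j,q)\le 2r^*_j\le 2r'_j$. Hence $\Pi^*_j\subseteq\ball(p,2r'_j)$, which is exactly the $\Pi^*$-respecting condition for this ball. All of $\Pi^*_j$ is then removed from $U$, so no later picked point lies in $\Pi^*_j$, and the invariant is preserved.

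It follows that each step of this branch covers a distinct optimal cluster. Therefore the branch ends with $U=\emptyset$ after at most $\ell\le k$ steps, since every point lies in some cluster. Its cost is at most
\[
\sum_j 2r'_j\le 2(1+\eps)\opt .
\]
Running the procedure with $\eps/2$ in place of $\eps$, which only affects the list from \Cref{thm:epsapxrad}, gives the bound $(2+\eps)\opt$.

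The main subtlety is handling degenerate cases. The first is clusters with $r^*_j=0$, which is fine because the radius-zero ball still contains the whole (single-point) cluster. The second is a cluster whose optimal center lies outside the cluster itself; this is fine because the bound uses only the triangle inequality through $o_j$. The third is the requirement that the cover consist of center-disjoint balls, which is guaranteed since each new center is chosen from the still-uncovered set $U$.
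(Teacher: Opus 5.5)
Your proposal is correct and takes essentially the same route as the paper: a $k^k$-way branching greedy that picks an uncovered point, guesses the index of its optimal cluster, and removes $\ball(p,2r'_j)$, with the same triangle-inequality containment $\Pi^*_j\subseteq \ball(p,2r'_j)$ and the same ``one whole optimal cluster removed per step'' termination argument. Your explicit $\eps$-bookkeeping, which gets $2\sum_j r'_j\le 2(1+\eps)\opt$ and then rescales $\eps$, is more careful than the paper, which asserts $2\sum_i r'_i\le(2+\eps)\opt(\cI)$ directly; that inequality only holds after such a rescaling, and it fails e.g.\ for $k=1$ with two points at distance $1$ and $R'=(1+\eps)$.
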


\begin{proof}
Our algorithm is similar to the algorithms of~\cite{ChenXXZ24} and~\cite{DBLP:conf/approx/BandyapadhyayC25} for unconstrained \msr, which in turn is build on an extension of Gonzalez’s algorithm for $k$-center~\cite{gonzalez1985clustering}. The idea is to iteratively pick a point $p$ from the remaining points, guess the radius $r^*_p$ of the optimal cluster containing $p$, and remove all points in $\ball(p,2r^*_p)$. For correct guesses, each iteration removes all points from at least one optimal cluster, and hence the process terminates in at most $k$ iterations. Moreover, the selected points can serve as centers, and each such center $p$ can cover all points of its optimal cluster within distance $2r^*_p$, yielding a $2$-approximate solution for the unconstrained \msr.

We extend this idea to the constrained setting. Fix an optimal clustering $\Pi^* = \{\Pi_1^*,\dots,\Pi_k^*\}$ with centers $o_1^*,\dots,o_k^*$. Let $R^*$ denote the multiset of radii of the clusters in $\Pi^*$. Fix an arbitrary ordering of the clusters and represent $R^*$ as a vector $(r_1^*,\dots,r_k^*)$.\footnote{If the optimal solution uses fewer than $k$ clusters, we pad $R^*$ with zeros to obtain a vector of length $k$. This padding is purely notational and does not correspond to actual clusters.} and $R'=(r'_1,\dots,r'_k)$ be an $\eps$-approximation of $R^*$. Under this ordering, each cluster $\Pi_i^*$ is associated with an index $i \in [k]$.

We now describe how to use $R'$ within the greedy procedure. The pseudo-code is given in~\Cref{algo:gbc}.

\begin{algorithm}[h]
\caption{Greedy Ball Cover (GBC)} \label{algo:gbc}
\KwData{Instance $\cI=\msrins$ under mergeable constraint, and a radius profile $R'=(r'_1,\dots,r'_k)$}
\KwResult{A collection of at most $k$ balls}

$X' \leftarrow X$; 
$\cB \leftarrow \emptyset$\;

\While{$X' \neq \emptyset$ and $|\cB| < k$}
{   
    pick an arbitrary point $p \in X'$\;
    guess the index $i \in [k]$ of the optimal cluster containing $p$\;
    $X' \gets X' \setminus \ball(p,2r'_i)$\;
    $\cB \gets \cB \cup \{\ball(p,2r'_i)\}$\;
}

\Return $\cB$ if $\bigcup_{B \in \cB} B \supseteq X$, \textbf{else fail};
\end{algorithm}

If all guesses are correct, then in each iteration the algorithm removes all points from at least one cluster of $\Pi^*$. Hence, after at most $k$ iterations, we obtain $X' = \emptyset$, and the algorithm does not fail. Moreover, for each selected ball $B=\ball(p,2r'_i)$, the center $p$ lies in $\Pi_i^*$, and for every $x \in \Pi_i^*$,
\[
d(p,x) \le d(p,o_i^*) + d(o_i^*,x) \le r_i^* + r_i^* \le 2r'_i,
\]
where the last inequality follows since $r'_i \ge r_i^*$. Thus, $\Pi_i^* \subseteq B$, and hence $\cB$ is a $\Pi^*$-respecting cover.
The cost of $\cB$ is at most $2 \cdot \sum_{i=1}^k r'_i \le (2+\eps)\opt(\cI)$.

We eliminate guessing by branching over all possible choices of indices $i \in [k]$ in each iteration. Since there are at most $k$ iterations, this yields at most $k^k$ branches. For each branch, we obtain a collection of balls $\cB$ and add it to $\mathbb{B}$. Since there exists a branch corresponding to the correct sequence of guesses, at least one branch adds a $\Pi^*$-respecting cover.
\end{proof}

\subsection{From Respecting Cover to Feasible Cover}\label{ss:fc}

In this section, we show how to transform a ball cover into a feasible cover using an approximate radius profile. Since we do not know how to efficiently verify whether a given ball cover is a feasible cover, our algorithm instead returns a family of covers that is guaranteed to contain a feasible one.

\begin{theorem}\label{thm:feascover}
There exists an algorithm that, given an instance $\cI$ of \msr under mergeable constraints, 
a $k$-ball cover $\cB$ for $\cI$, and a radius profile $R'$, runs in time $2^{O(k \log k)} \cdot n$ 
and returns a family $\mathbb{F}$ of $k$-ball covers such that the following holds:

For every optimal clustering $\Pi^*$ for $\cI$ with radius profile $R^*$, 
if $\cB$ is a $\Pi^*$-respecting cover obtained by the greedy ball cover (GBC) procedure using $R'$, 
and $R'$ is an $\eps$-approximation of $R^*$, 
then there exists a cover $\cF \in \mathbb{F}$ such that $\cF$ is a $\Pi^*$-feasible cover and
\[
\cost(\cF) \le (2+\eps)\opt(\cI).
\]
\end{theorem}

\begin{algorithm}[h]
\caption{Feasible Cover Candidates (FC)} \label{algo:fc}
\KwData{Instance $\cI$, $k$-ball cover $\cB=\{B_1,\dots,B_m\}, m\le k$, radius profile $R'=(r'_1,\dots,r'_k)$}
\KwResult{A family $\mathbb{F}$ of $k$-ball covers}

$\mathbb{F} \leftarrow \emptyset$\;

\ForAll{$(t_1,\dots,t_m) \in (\{0\} \cup [k])^m$}{
    \ForEach{$B_\ell = \ball(p_\ell,R_\ell) \in \cB$}{
        \eIf{$t_\ell = 0$}{
            $B'_\ell \leftarrow B_\ell$\;
        }{
            $B'_\ell \leftarrow \Ext(B_\ell, r'_{t_\ell})$\;
        }
    }
    $\cF \leftarrow \{B'_1,\dots,B'_m\}$\;
    add $\cF$ to $\mathbb{F}$\;
}

\Return $\mathbb{F}$;
\end{algorithm}

\begin{proof}
The pseudocode of our algorithm is shown in~\Cref{algo:fc}.
Fix an optimal clustering $\Pi^* = \{\Pi_1^*,\dots,\Pi_k^*\}$ with centers $o_1^*,\dots,o_k^*$ and radius profile $R^*=(r_1^*,\dots,r_k^*)$. 
Assume that $\cB$ is a $\Pi^*$-respecting cover obtained by the greedy ball cover procedure using $R'$, and that $R'$ is an $\eps$-approximation of $R^*$.

For each cluster $\Pi^*_i$, we define the \emph{responsible} ball in $\cB$ as follows. If there is a ball $B_{j} \in \cB$ whose center lies in $\Pi^*_i$, then  $B_{j}$ is responsible for $\Pi^*_i$. 
Since the balls in $\cB$ is produced by the greedy ball cover procedure, each such optimal cluster has exactly one responsible ball in $\cB$.
Otherwise, since $\cB$ is a ball cover, make any ball  containing the center of $\Pi^*_i$ as the responsible ball (by breaking ties arbitrary). Furthermore, for each ball $B_\ell \in \cB$, let $S_\ell$ denote the indices of optimal clusters for which it is responsible. Note that $\{S_\ell\}_{\ell \in [m]}$ partitions the  indices of the optimal clusters.

Next, let $a_\ell \in S_\ell$ be such that the center $p_\ell$ of $B_\ell$ lies in $\Pi^*_{a_\ell}$. Since $\cB$ is $\Pi^*$-respecting, we have $\Pi^*_{a_\ell} \subseteq B_\ell$.
Consider the choice $(t_1,\dots,t_m)$ such that for each $\ell$:
\begin{itemize}
    \item if $S_\ell \setminus \{a_\ell\} = \emptyset$, then $t_\ell = 0$;
    \item otherwise, $t_\ell \in S_\ell \setminus \{a_\ell\}$ maximizes $r'_i$ over $i \in S_\ell$.
\end{itemize}
Now, consider the corresponding extended balls, $\cF=\{\Ext(B_\ell, r'_{t_\ell})\}_{\ell\in [m]}$. We claim that $\cF$ is a $\Pi^*$-feasible cover.

\paragraph{Correctness.}
Let $\Pi_i^*$ be any cluster and let $B_\ell \in \cB$ be its responsible ball. Then $i \in S_\ell$.

If $i = a_\ell$, then $\Pi_i^* \subseteq B_\ell$. Otherwise, $r'_{t_\ell} \ge r'_i \ge r_i^*$.
Thus, for any $x \in \Pi_i^*$,
\[
d(p_\ell,x) \le d(p_\ell,o_i^*) + d(o_i^*,x) \le \rad(B_\ell) + r_i^* \le \rad(B_\ell)  + r'_{t_\ell},
\]
where $\rad(B_\ell)$ is the radius of $B_\ell$. Thus, $\Pi_i^* \subseteq \Ext(B_\ell, r'_{t_\ell})$.
Therefore, every cluster $\Pi_i^*$ is fully contained in some ball of $\cF$, and hence $\cF$ is a $\Pi^*$-feasible cover.
\noindent
\textbf*{Cost analysis.}
We rely on the crucial observation that the indices used in the expansion step correspond to clusters that are not already accounted for by the original cover $\cB$. Hence, each radius is charged at most once overall. We use this intuition to bound the cost.

Since $\cB$ is produced by the greedy ball cover procedure, each ball $B_\ell=\ball(p_\ell,R_\ell)$ corresponds to the unique cluster index $a_\ell$ in the optimal solution such that $
R_\ell = 2r'_{i_\ell},
$
and the indices $\{a_\ell\}_{\ell=1}^m$ are distinct.
Thus,
\[
\cost(\cB) = \sum_{\ell=1}^m R_\ell = 2 \sum_{\ell=1}^m r'_{a_\ell}.
\]

In the expansion step, we choose $t_\ell \in S_\ell \setminus \{a_\ell\}$, and hence the indices $\{t_\ell\}$ lie in $[k] \setminus \{a_\ell\}_{\ell}$. Therefore,
$
\sum_{\ell=1}^m r'_{t_\ell} \le \sum_{i \notin \{a_\ell\}} r'_i$.
Combining the two bounds,
\[
\cost(\cF) 
= \sum_{\ell=1}^m (R_\ell + r'_{t_\ell})
\le 2 \sum_{i \in \{a_\ell\}} r'_i + \sum_{i \notin \{a_\ell\}} r'_i
\le 2 \sum_{i=1}^k r'_i.
\]

Since $R'$ is an $\eps$-approximation of $R^*$, we obtain $
\cost(\cF) \le (2+\eps)\opt(\cI)$.

\Cref{algo:fc} enumerates over all choices of $(t_1,\dots,t_m)$, where $t_i \in \{0\}\cup [k]$, and therefore there is a branch corresponding to the correct choice of $(t_1,\dots,t_m)$ that constructs $\cF$.
 
 \noindent
\textbf{Running time.}
There are at most $(k+1)^k = 2^{O(k \log k)}$ choices of $(t_1,\dots,t_m)$. Each candidate is constructed in linear time, yielding the claimed running time.
\end{proof}

\subsection{A $(2+\eps)$-Approximation with an Assignment Subroutine}\label{ss:2apx}

We now show how to obtain a $(2+\eps)$-approximation for \msr under mergeable constraints, assuming access to an assignment subroutine (see \Cref{def:assgn}). The algorithm combines the greedy ball cover procedure with the feasible cover construction, and then invokes the assignment subroutine to recover a clustering.

\TwoApx*

\begin{algorithm}[h]
\caption{$(2+\eps)$-Approximation via Assignment Subroutine} \label{algo:2apx}
\KwData{Instance $\cI$ of \msr under a mergeable constraint, parameter $\eps>0$}
\KwResult{A feasible clustering}

Compute a list $\cL$ of radius profiles using \Cref{thm:epsapxrad}\;

$\text{best} \leftarrow \bot$, $\text{bestCost} \leftarrow \infty$\;

\ForEach{$R' \in \cL$}{
    Run GBC with $R'$ to obtain family $\mathbb{B}$\;

    \ForEach{$\cB \in \mathbb{B}$}{
        Run FC on $(\cB, R')$ to obtain family $\mathbb{F}$\;

        \ForEach{$\cF' \in \mathbb{F}$}{
            Run $\cA_A$ on $\cF'$\;
            \If{$\cA_A$ returns a clustering $(C,\sigma)$}{
                \If{$\cost_{\cI}((C,\sigma)) < \text{bestCost}$}{
                    $\text{best} \leftarrow (C,\sigma)$\;
                    $\text{bestCost} \leftarrow \cost_{\cI}((C,\sigma))$\;
                }
            }
        }
    }
}

\Return $\text{best}$;
\end{algorithm}

\begin{proof}
The pseudocode of our algorithm is shown in~\Cref{algo:2apx}.
Let $\Pi^*$ be an optimal clustering with radius profile $R^*$. By \Cref{thm:epsapxrad}, there exists $R' \in \cL$ that is an $\eps$-approximation of $R^*$.

For this choice of $R'$, by \Cref{thm:gbc}, there exists $\cB \in \mathbb{B}$ that is a $\Pi^*$-respecting cover with cost at most $(2+\eps)\opt(\cI)$.

For this $\cB$, by \Cref{thm:feascover}, there exists $\cF \in \mathbb{F}$ that is a $\Pi^*$-feasible cover and satisfies $\cost(\cF) \le (2+\eps)\opt(\cI)$.

Since $\cF$ is $\Pi^*$-feasible, using~\Cref{obs:fc}, there exists a feasible clustering $(C,\sigma)$ with cost at most $\cost(\cF)$.
Hence, the assignment subroutine $\cA_A$ succeeds on input $\cF$ and returns such a clustering. By definition of $\cA_A$, we have
\[
\cost_{\cI}((C,\sigma)) \le \cost(\cF) \le (2+\eps)\opt(\cI).
\]

Since the algorithm returns the minimum-cost clustering among all candidates, the output cost is at most $(2+\eps)\opt(\cI)$.

The running time follows from enumerating all choices of $R' \in \cL$, $\cB \in \mathbb{B}$, and $\cF \in \mathbb{F}$, and invoking $\cA_A$ on each candidate, yielding a total running time of $2^{O(k \log (k/\eps))} \cdot n^2 \cdot T(\cA_A)$.
\end{proof}

\subsection{A Simple $(4+\eps)$-Approximation}\label{ss:4apx}

In this section, we present a simple $(4+\eps)$-approximation algorithm for \msr under mergeable constraints. The algorithm avoids the use of an assignment subroutine and instead exploits the structure of a feasible cover.

\begin{theorem}\label{thm:4apx}
There is an algorithm that given an instance $\cI=\msrins$ of \msr under a mergeable constraint, and any $\eps>0$, computes a feasible clustering of cost at most $(4+\eps)\opt(\cI)$ in time $2^{O(k \log (k/\eps))} \cdot \poly(n)$.
\end{theorem}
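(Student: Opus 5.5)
We obtain the $(4+\eps)$-approximation without an assignment subroutine by turning a candidate feasible cover into a clustering through merging overlapping balls. The pipeline is as in \Cref{algo:2apx}. We compute the list $\cL$ of \Cref{thm:epsapxrad}, run GBC (\Cref{thm:gbc}) for each $R'\in\cL$, and run FC (\Cref{thm:feascover}) on each resulting $\cB$. For some branch this produces a $\Pi^*$-feasible cover $\cF=\{F_1,\dots,F_m\}$ with $\cost(\cF)\le(2+\eps)\opt$. The number of candidates is $2^{O(k\log(k/\eps))}n^2$.

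For each candidate $\cF$ we build a clustering directly.
\begin{itemize}
\item Form the intersection graph $G$ on $\cF$, where $F_a$ and $F_b$ are adjacent if the two balls share a point.
\item For each connected component $Q$ of $G$, pick any ball $\ball(c_Q,R_Q)\in Q$ and assign to $c_Q$ every point covered by a ball of $Q$ (each point goes to a single component, breaking ties arbitrarily).
\end{itemize}
The balls of $Q$ form a connected chain. Therefore every point of $\bigcup Q$ lies within distance $2\sum_{F\in Q}\rad(F)$ of $c_Q$, since each hop along a path through intersecting balls adds at most the sum of two radii. Hence the resulting clustering costs at most $2\cost(\cF)\le(4+2\eps)\opt$, and rescaling $\eps$ gives $4+\eps$. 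We check feasibility of each produced clustering with the assumed verifier and output the cheapest feasible one.

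The main step is showing that for the good branch the produced clustering is feasible. Since $\cF$ is $\Pi^*$-feasible, every optimal cluster $\Pi_i^*$ is contained in some ball of $\cF$. If a point of $\Pi_i^*$ also lies in another ball, that ball intersects the containing ball, so both lie in the same component. Thus every optimal cluster has all its points covered only by balls of a single component $Q$. The cluster assigned to $c_Q$ is then exactly the union of the optimal clusters whose points lie in $\bigcup Q$. Note that assigning by component rather than by individual ball is precisely what prevents an optimal cluster from being split.

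The constraint is mergeable, so repeatedly merging clusters of the feasible optimal solution shows that this union-of-optimal-clusters clustering is feasible. Components that receive no optimal cluster are empty and are dropped; this can only lower the cost. Components cannot receive only part of a cluster, by the argument above.

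The running time is the number of candidates times $\poly(n)$ for building the components and running the verification.
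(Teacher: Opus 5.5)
Your proposal is correct and follows the paper's proof essentially step for step: you use the same GBC/FC pipeline, the same component-merging on the intersection graph with the $2\cost(\cF)$ path bound, and the same feasibility argument that a $\Pi^*$-feasible cover never splits an optimal cluster across components. Your tie-breaking and empty-component remarks are harmless but vacuous. Balls in different components are disjoint, and every ball contains its own center, so each component's cluster is nonempty.
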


\begin{proof}
We first prove the following lemma to compute a clustering for $\cI$ using a $k$-ball cover for $\cI$.
\begin{lemma}\label{lem:4apx}
There is an algorithm that, given a $k$-ball cover $\cF$ for an instance $\cI=\msrins$ of \msr under a mergeable constraint, computes in polynomial time a clustering $(C,\sigma)$ of cost at most $2\cdot\cost(\cF)$ such that if $\cF$ is a feasible cover for $\cI$, then $(C,\sigma)$ is a feasible clustering for $\cI$.
\end{lemma}

\begin{proof}

We define the intersection graph $G$ over $\cF$, where each vertex corresponds to a ball in $\cF$, and two vertices are connected if the corresponding balls intersect. We process each connected component of $G$ independently.
Let $C$ be any connected component of $G$. Denote by $\cF_C$ the set of balls in $C$. The pseudocode is shown in~\Cref{algo:4apx}

\begin{algorithm}[h]
\caption{Component Merging (MC)} \label{algo:4apx}
\KwData{Instance $\cI$ of \msr under a mergeable constraint and a $k$-ball cover  $\cF = \{B_1,\dots,B_m\}$ for $\cI$}
\KwResult{A clustering for $\cI$}

Construct the intersection graph $G$ over $\cF$\;

\ForEach{connected component $C$ of $G$}{
    Let $\cF_C \subseteq \cF$ be the balls in $C$\;
    Let $X_C \leftarrow \bigcup_{B \in \cF_C} B$\;
    Pick an arbitrary point $c_C \in X_C$\;
    Assign all points in $X_C$ to center $c_C$\;
}

\Return the resulting clustering\;
\end{algorithm}

We now bound the radius of each constructed cluster. Fix a component $C$ and pick any ball $B_0 = \ball(c_0, r_0) \in \cF_C$ that contains $c_C$ (the center chosen by the algorithm for $C$). Let $x$ be any point in $X_C = \bigcup_{B \in \cF_C} B$. Since $C$ is connected, there exists a sequence of balls $B_0, B_1, \dots, B_t$ in $\cF_C$ such that each consecutive pair intersects and $x \in B_t$.

Using triangle inequality along this sequence, we obtain
\[
d(c_C, x) \le  2\sum_{j=0}^{t} r_j  \le 2 \sum_{B_i \in \cF_C} r_i.
\]
Thus, the radius of the cluster corresponding to $C$ is at most $2 \sum_{B_i \in \cF_C} r_i$.

Summing over all connected components, the total cost of the clustering is at most
\[
\sum_C 2 \sum_{B_i \in \cF_C} r_i = 2 \sum_{i=1}^m r_i  = 2\cdot \cost(\cF),
\]
using the fact that $\{\cF_C\}_C$ forms a partition of $\cF$. 
\paragraph*{Feasibility.}
Suppose $\cF$ is a $\Pi^*$-feasible cover for some optimal clustering $\Pi^*$ to $\cI$. Then there exists a ball $B_j \in \cF$ such that $\Pi_i^* \subseteq B_j$. Now consider any other ball $B_{j'} \in \cF$ that intersects $\Pi_i^*$. Since $\Pi_i^* \subseteq B_j$, we have $B_{j'} \cap B_j \neq \emptyset$, and hence $B_{j'}$ is adjacent to $B_j$ in $G$. Therefore, all balls that intersect $\Pi_i^*$ lie in the same connected component of $G$. In particular, all points of $\Pi_i^*$ are contained in the union of balls of a single component.
This implies that no optimal cluster is split across multiple components. Hence, merging all points within a component corresponds to merging entire optimal clusters, and by mergeability of the constraint, this yields a feasible cluster.

\end{proof}
We now combine \Cref{algo:4apx} with the procedures developed earlier to obtain the complete $(4+\eps)$-approximation algorithm shown in \Cref{algo:4apxcomplete}.
For analysis, fix an optimal clustering $\Pi^*$ with radius profile $R^*$, and let $R'$ be an $\eps$-approximation of $R^*$. By~\Cref{thm:epsapxrad}, such an $R'$ appears in $\cL$.
Consider this iteration. By~\Cref{thm:gbc}, there exists a cover $\cB \in \mathbb{B}$ that is $\Pi^*$-respecting and satisfies
\[
\cost(\cB) \le (2+\eps)\opt(\cI).
\]
Applying~\Cref{thm:feascover}, there exists $\cF \in \mathbb{F}$ such that $\cF$ is a $\Pi^*$-feasible cover and
\[
\cost(\cF) \le (2+\eps)\opt(\cI).
\]
Since $\cF$ is a feasible cover, we can apply~\Cref{lem:4apx} to obtain a feasible clustering $(C,\sigma)$ with
\[
\cost_{\cI}((C,\sigma)) \le 2\cdot \cost(\cF)
\le 2\cdot(2+\eps)\opt(\cI)
= \left(4 + 2\eps\right)\opt(\cI).
\]
By scaling $\eps$, this yields a $(4+\eps)$-approximation.
Since the algorithm returns the minimum-cost feasible solution encountered, it outputs a clustering with cost at most $(4+\eps)\opt(\cI)$.


\begin{algorithm}[h]
\caption{$(4+\eps)$-Approximation} \label{algo:4apxcomplete}
\KwData{Instance $\cI$  of \msr under a mergeable constraint, parameter $\eps>0$}
\KwResult{A feasible clustering}

Compute a list $\cL$ of radius profiles using \Cref{thm:epsapxrad}\;

$\text{best} \leftarrow \bot$, $\text{bestCost} \leftarrow \infty$\;

\ForEach{$R' \in \cL$}{
    Run GBC with $R'$ to obtain family $\mathbb{B}$\;

    \ForEach{$\cB \in \mathbb{B}$}{
        Run FC on $(\cB, R')$ to obtain family $\mathbb{F}$\;

        \ForEach{$\cF \in \mathbb{F}$}{
            Run MC on $\cI$ and $\cF$ to get clustering $(C,\sigma)$\;
            \If{$(C,\sigma)$ satisfies the mergeable constraint}{
                \If{$\cost_{\cI}((C,\sigma)) < \text{bestCost}$}{
                    $\text{best} \leftarrow (C,\sigma)$\;
                    $\text{bestCost} \leftarrow \cost_{\cI}((C,\sigma))$\;
                }
            }
        }
    }
}

\Return $\text{best}$;
\end{algorithm}

\end{proof}

\subsection{Improvement to factor $(8/3+\eps)$}\label{ss:mainsec}

In this section, we prove our main algorithmic result.


\MainApx*
\begin{proof}
The main ingredient of our algorithm is the following structural result from~\cite{DBLP:conf/approx/BandyapadhyayC25}, which improves upon the clustering cost obtained from~\Cref{lem:4apx} when given a $k$-ball cover. The key idea is that, instead of selecting an arbitrary center from each connected component, one chooses a point that minimizes its maximum distance to any other point in the component, leading to a tighter bound.

\begin{restatable}{lemma}{BCStructural}\label{lem:bc43}
There is a polynomial-time algorithm that, given a $k$-ball cover $\cF$ for an instance $\cI=\msrins$ of \msr under a mergeable constraint, computes a clustering $(S,\sigma)$ of cost at most $\tfrac{4}{3}\cost(\cF)$. Moreover, if $\cF$ is a feasible cover for $\cI$, then $(S,\sigma)$ is a feasible clustering for $\cI$.
\end{restatable}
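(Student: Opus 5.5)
The plan is to reuse the component-merging framework of \Cref{lem:4apx} and change only the choice of center in each component. Build the intersection graph $G$ over $\cF$ and handle each connected component $C$ on its own. For $C$, choose as center the point $c_C \in X_C$ that minimizes $\max_{x \in X_C} d(c_C,x)$; this is computable in polynomial time by trying every candidate point. Then assign all of $X_C$ to $c_C$. Feasibility needs no new work: the feasibility argument in the proof of \Cref{lem:4apx} uses only the fact that every optimal cluster lies inside the union of a single component. So if $\cF$ is a feasible cover, each output cluster is a union of optimal clusters, and mergeability makes it feasible. It therefore suffices to prove, for each component with total radius $S=\sum_{B\in\cF_C}\rad(B)$, that some point $c\in X_C$ satisfies $\max_{x\in X_C} d(c,x)\le \tfrac43 S$. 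Summing over components then gives cost at most $\tfrac43\cost(\cF)$.

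To prove the per-component bound, take a spanning tree $T$ of the component, with node weights $r_j$. For two balls $a,b$ and any point $x\in B_b$, the triangle inequality along the tree path $a=j_0,j_1,\dots,j_s=b$ gives
\[
d(c_a,x)\le r_a+2\sum_{0<i\le s} r_{j_i},
\]
and an analogous bound holds when the candidate center is an arbitrary data point of some ball rather than a ball center. Let $P$ be a path in $T$ of maximum total weight. Walking along $P$, let $a$ be the ball containing the weighted midpoint, so that the weights $L$ and $R$ of the two sides satisfy $L+R+r_a\le S$ and $\max(L,R)\le S/2$. Any branch of $T$ leaving $a$ has weight at most $\max(L,R)$, because otherwise $P$ would not be maximum. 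Hence choosing $c_a$ as center gives radius at most $r_a+2\max(L,R)\le \min(r_a+S,\;2S-r_a)$.

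I expect the main obstacle to be the middle regime $S/3<r_a<2S/3$. There the bound above only gives $\tfrac32 S$, so a different center is needed. My first attempt is a three-way case analysis. If $r_a\le S/3$, use $c_a$, which gives $\le \tfrac43 S$. If $r_a\ge 2S/3$, use $c_a$ again, which gives $2S-r_a\le\tfrac43 S$. In the middle case, use the fact that the side weights satisfy $L,R\ge S/2-r_a$, together with the remaining slack $S-r_a-L-R$, and compare $c_a$ against a point of $B_a$ shifted toward the heavier side; a natural candidate is a point lying in $B_a$ and in the first ball of the heavier side. The goal is to show the larger of the two one-sided distances drops to at most $\tfrac43 S$, by balancing $r_a$ against the heavier side so that the worst case is attained at $r_a=S/3$.

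If this balancing fails, the fallback is to replicate the exact structural argument of \cite{DBLP:conf/approx/BandyapadhyayC25}. That argument bounds the \emph{diameter} of $X_C$ by $2S$ and then argues about the minimax point directly, rather than through a designated ball center. Either way, the only genuinely new inequality needed is the per-component statement $\min_{c\in X_C}\max_{x\in X_C}d(c,x)\le\tfrac43 S$; everything else is inherited from \Cref{lem:4apx}.
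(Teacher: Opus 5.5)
Your reduction to the per-component claim, the minimax choice of center, the inherited feasibility argument, and the two extreme cases $r_a\le S/3$ and $r_a\ge \tfrac23 S$ are all correct and match the paper's framework. The gap is the middle case $S/3<r_a<\tfrac23 S$. This is exactly where the factor $\tfrac43$, rather than the $\tfrac32$ your midpoint bound gives, has to be earned, and you only state it as a goal. So the lemma is not yet proved.

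The fallback does not close it either. In a general metric, a diameter bound of $2S$ gives no better eccentricity bound than $2S$ for any single point, so all the content lies in exhibiting a specific good point. The paper does this as follows. It first disposes of the case of a ball with radius at least two thirds of the component's total. It then takes a critical index $i^*$ on the max-weight leaf-to-leaf path, defined through cumulative radius sums that include the subtrees hanging off the path. The center is either the center of $\ball(\mu_{i^*})$ or a point of $\ball(\mu_{i^*-1})\cap\ball(\mu_{i^*})$.

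Your own candidate does in fact work, and the missing computation is short. Write $W=L+r_a+R\le S$ for the weight of $P$, assume $L\ge R$, and let $b$ be the neighbor of $a$ on $P$ on the $L$-side. Call the component of $T-a$ containing $b$ the $L$-side.

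You need maximality of $P$ for branches at every path node, not just at $a$:
\begin{itemize}
\item a branch hanging at a path node weighs at most the part of $P$ beyond that node;
\item a branch at $a$ weighs at most $\min(L,R)$.
\end{itemize}
Consequently every tree path from $b$ inside the $L$-side has weight at most $L$. Likewise, every tree path from $a$ to a node outside the $L$-side has weight at most $R$ once $a$ itself is excluded. Hence for any $p\in B_a\cap B_b$ and every $x\in X_C$,
\[
d(p,x)\le\max\{2L,\;2(r_a+R)\}\le \max\{S,\;2S-2L\},
\]
using $L\le W/2\le S/2$ and $r_a+R\le S-L$. This is at most $\tfrac43 S$ whenever $L\ge S/3$. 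If instead $L<S/3$ and $r_a<\tfrac23 S$, the center $c_a$ already gives $r_a+2L<\tfrac43 S$.

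With this inserted, your argument is a complete and somewhat leaner alternative to the paper's case analysis, since it uses only path weights and needs no subtree-inclusive sums.
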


For completeness, we provide the proof in~\Cref{app:omm}.
We remark that the above bound is tight; see~\Cref{fig:tightbound}.

The pseudocode of our algorithm is given in~\Cref{algo:83apx}. It follows the same structure as~\Cref{algo:4apxcomplete}, except that we invoke the algorithm of~\Cref{lem:bc43} on each candidate cover.

\begin{algorithm}[h]
\caption{$(\tfrac{8}{3}+\eps)$-Approximation} \label{algo:83apx}
\KwData{Instance $\cI$  of \msr under a mergeable constraint, parameter $\eps>0$}
\KwResult{A feasible clustering}

Compute a list $\cL$ of radius profiles using \Cref{thm:epsapxrad}\;

$\text{best} \leftarrow \bot$, $\text{bestCost} \leftarrow \infty$\;

\ForEach{$R' \in \cL$}{
    Run GBC with $R'$ to obtain family $\mathbb{B}$\;

    \ForEach{$\cB \in \mathbb{B}$}{
        Run FC on $(\cB, R')$ to obtain family $\mathbb{F}$\;

        \ForEach{$\cF \in \mathbb{F}$}{
            Use the algorithm of~\Cref{lem:bc43} on $\cF$ to obtain clustering $(C,\sigma)$\;
            \If{$(C,\sigma)$ satisfies the mergeable constraint}{
                \If{$\cost_{\cI}((C,\sigma)) < \text{bestCost}$}{
                    $\text{best} \leftarrow (C,\sigma)$\;
                    $\text{bestCost} \leftarrow \cost_{\cI}((C,\sigma))$\;
                }
            }
        }
    }
}

\Return $\text{best}$\;
\end{algorithm}

\paragraph*{Analysis.}
Fix an optimal clustering $\Pi^*$ with radius profile $R^*$, and let $R'$ be an $\eps$-approximation of $R^*$. By~\Cref{thm:epsapxrad}, such an $R'$ appears in $\cL$.

Consider this iteration. By~\Cref{thm:gbc}, there exists a cover $\cB \in \mathbb{B}$ that is $\Pi^*$-respecting and satisfies
\[
\cost(\cB) \le (2+\eps)\opt(\cI).
\]
Applying~\Cref{thm:feascover}, there exists $\cF \in \mathbb{F}$ such that $\cF$ is a $\Pi^*$-feasible cover and
\[
\cost(\cF) \le (2+\eps)\opt(\cI).
\]
Since $\cF$ is a feasible cover, we can apply~\Cref{lem:bc43} to obtain a feasible clustering $(C,\sigma)$ with
\[
\cost_{\cI}((C,\sigma)) \le \tfrac{4}{3} \cost(\cF)
\le \tfrac{4}{3}(2+\eps)\opt(\cI)
= \left(\tfrac{8}{3} + \tfrac{4}{3}\eps\right)\opt(\cI).
\]
By scaling $\eps$, this yields a $(\tfrac{8}{3}+\eps)$-approximation.
Since the algorithm returns the minimum-cost feasible solution encountered, it outputs a clustering with cost at most $(\tfrac{8}{3}+\eps)\opt(\cI)$.


\noindent
\textbf{Running time.}
The list $\cL$ has size $2^{O(k \log (k/\eps))} \cdot n^2$ and can be computed in the same time by~\Cref{thm:epsapxrad}. For each $R'$, GBC produces at most $k^k = 2^{O(k \log k)}$ candidates, and FC produces at most $(k+1)^k = 2^{O(k \log k)}$ candidates per cover. The procedure of~\Cref{lem:bc43} runs in polynomial time.
Thus, the overall running time is $2^{O(k \log (k/\eps))} \cdot \poly(n)$,
as claimed.
\begin{figure} \centering \includegraphics[width=0.35\linewidth]{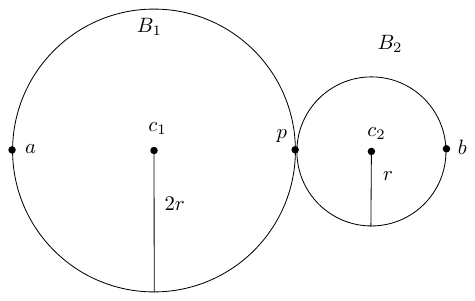} \caption{\small An example of a $2$-cover in which  the minimum cost is $4/3$ of the sum of radii of the balls.} \label{fig:tightbound} \end{figure}
\end{proof}

\subsection{Assignment Subroutines}\label{ss:assgn}
In this section, we present assignment subroutines for several clustering problems. These subroutines, together with~\Cref{thm:2apx}, imply a $(2+\eps)$ \fpt-approximation for all these problems.

\paragraph*{\abfair.}
The assignment subroutine for \abfair follows from~\cite{DBLP:journals/jcss/BandyapadhyayFS24}, who designed an assignment subroutine for \abfairsc under the $k$-median and $k$-means objectives. The key idea in their work is to solve a mixed-integer linear program with $O(k)$ many integer variables. This allows us to find an optimal assignment in \fpt\ time using the celebrated result of Lenstra~\cite{DBLP:journals/mor/Lenstra83} and subsequent improvements by~\cite{DBLP:journals/mor/Kannan87,DBLP:journals/combinatorica/FrankT87}.

Recall that in \abfair, there are $\ell$ groups $X_1,\dots,X_\ell$ over the point set $X$, where $\ell$ is a constant. These groups may overlap, and we refer to them as colors. For every color $c \in [\ell]$, we are given a range $[\alpha_c,\beta_c]$, specifying the fraction of points of color $c$ that must be present in any cluster. Let $\Gamma$ denote the number of distinct collections of groups that a point can belong to; we refer to these collections as equivalence classes, or simply classes. 
For a class $\tau \in [\Gamma]$, let $X^\tau$ denote the set of points belonging to class $\tau$. 

In the assignment problem, we are given a set of balls $\cB=\{B_1,\dots, B_k\}$, and we want to compute an $(\alpha,\beta)$-fair clustering by assigning each point to the center of a ball that contains it. For a point $p \in X$, let $\gamma(p)$ denote the set of balls in $\cB$ that contain $p$.

\begin{align}
\text{Variables:} \quad 
& f_{p,i} \in [0,1] && \forall p \in X,\; i \in \gamma(p) \\
& L_{\tau,i} \in \mathbb{Z}_{\ge 0} && \forall \tau\in [\Gamma],\; i \in [k]\\[6pt]
\text{(Assignment)} \quad 
& \sum_{i \in \gamma(p)} f_{p,i} = 1 
&& \forall p \in X \\[6pt]
\text{(Class loads)} \quad 
& L_{\tau,i} = \sum_{p \in X^\tau \cap B_i} f_{p,i}
&& \forall \tau \in [\Gamma],\; i \in [k] \\[6pt]
\text{(Lower bound constraints)} \quad 
& \sum_{p \in X_c \cap B_i} f_{p,i}
\;\ge\;
\alpha_c \sum_{p \in X \cap B_i} f_{p,i}
&& \forall i \in [k],\; c \in [\ell] \\[6pt]
\text{(Upper bound constraints)} \quad 
& \sum_{p \in X_c \cap B_i} f_{p,i}
\;\le\;
\beta_c \sum_{p \in X \cap B_i} f_{p,i}
&& \forall i \in [k],\; c \in [\ell]
\end{align}

Since we only seek a feasible assignment, the objective is a dummy objective:
\begin{align*}
    \text{minimize} \quad  0.
\end{align*}

In this program, we require only the variables $L_{\tau,i}$—which correspond to the total number of points of class $\tau \in [\Gamma]$ assigned to ball $i \in [k]$—to be integral. It is known that this program can be solved optimally in time $(k\Gamma)^{O(k\Gamma)} \poly(n)$, which is \fpt\ in $k$, since $\Gamma$ is a constant. Using this solution,~\cite{DBLP:journals/jcss/BandyapadhyayFS24} show how to obtain an integral clustering (i.e., an assignment of each point to a single ball) in polynomial time via a flow-based rounding procedure.

\medskip
\noindent
\textbf{Balanced clustering and clustering with lower bounds.}
For both these problems, \cite{DBLP:conf/isaac/CartaDHR024} provide a simple flow-based polynomial-time assignment subroutine.


\section{Hardness of \msr}\label{sec:hard}

In this section, we establish  parameterized hardness for \msr. In particular, we show that the classical (unconstrained) version of \msr is \Wtwo-hard when parameterized by $k$, and moreover, it does not admit an efficient parameterized approximation scheme (EPAS).

\MSRHardness*
\begin{proof}
We give a parameterized reduction from the following variant of the Set Cover problem.

\paragraph*{\scover.}
We are given a universe $\cU$ and $k$ collections of subsets $\cC_1,\dots,\cC_k \subseteq 2^{\cU}$. The task is to select one set from each collection such that their union covers $\cU$. The parameter is $k$. This problem is known to be \Wtwo-hard.

Given an instance $\cJ$ of \scover, we construct an instance $\cI$ of \msr such that, if $\cJ$ admits a set cover of size $k$, then the optimal cost of $\cI$ is at most $2^k-1$; otherwise, every feasible solution to $\cI$ has cost at least $2^k$.

\paragraph*{Construction.}
We define a graph $G$ whose vertex set consists of:
\begin{itemize}
    \item (element vertices) a vertex for each element $u \in \cU$,
    \item (set vertices) a vertex for each set $S \in \bigcup_{i=1}^k \cC_i$,
    \item (auxiliary vertices) for each collection $\cC_i$, a set of $k+1$ auxiliary vertices $A_i$
\end{itemize}

We add edges as follows:
\begin{itemize}
    \item For each set $S$ and each element $u \in S$, add an edge between $S$ and $u$.
    \item For each collection $\cC_i$, connect all sets in $\cC_i$ into a clique.
    \item For each collection $\cC_i$, connect every auxiliary vertex in $A_i$ to every set in $\cC_i$.
\end{itemize}

We assign edge weights as follows:
\begin{itemize}
    \item All edges incident to sets in $\cC_i$ have weight $2^{i-1}$.
\end{itemize}

The metric $d$ is defined as the shortest-path metric induced by this weighted graph. The parameter for the \msr instance is $k$. Let $\cI$ be the resulting instance of \msr.

\paragraph*{Completeness.}
Suppose there exists a solution to the \scover instance $\cJ$, i.e., there are sets $S_1 \in \cC_1,\dots,S_k \in \cC_k$ whose union covers $\cU$.
Our solution to $\cI$ chooses the centers to be these corresponding $k$ sets. 
Let $C=\{c_1,\dots,c_k\}$ be the set of obtained centers such that $c_i$ corresponds to set $S_i$.
We assign each element $u \in \cU$ to any selected set that contains it, and, for $i \in [k]$, we assign all vertices in $A_i \cup \cC_i$  to $c_i \in C$. Let $(C,\sigma)$ be the resulting solution.

We now bound the cost of $(C,\sigma)$. For each $i \in [k]$, all vertices in $\cC_i \cup A_i$ are within distance at most $2^{i-1}$ from $c_i$, and any element assigned to $c_i$ is also at distance at most $2^{i-1}$. Thus, $\rad(c_i) \le 2^{i-1}$. Therefore, we have that $\opt(\cI) \le \cost((C,\sigma)) =
\sum_{i=1}^k 2^{i-1} = 2^k - 1$.

\paragraph*{Soundness.}
The soundness of the construction crucially relies on the following claim, whose proof is similar to one present in~\cite{GibsonKKPV12}.
\begin{claim}\label{cl:soundness}
    For any solution $(C,\sigma)$ to $\cI$, if  $\cost((C,\sigma)) \le 2^k-1$, then, for every $i \in [k]$, $C$ contains a vertex from $\cC_i$ with radius at least $2^{i-1}$.
\end{claim}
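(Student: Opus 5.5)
We prove the claim by downward induction on $i$, from $i=k$ to $i=1$. For each $i$ we will exhibit a center $c_i\in C\cap\cC_i$ with $\rad(c_i)\ge 2^{i-1}$, and we will make these centers pairwise distinct.

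\emph{Distance facts.} We first record the relevant distances in the shortest-path metric, all read off the construction.
\begin{itemize}
\item Every auxiliary vertex $a\in A_i$ is adjacent only to the sets of $\cC_i$, via edges of weight $2^{i-1}$. Hence $d(a,x)\ge 2^{i-1}$ for every $x\ne a$.
\item Two auxiliary vertices of $A_i$ are at distance $2^i$.
\item Every element vertex is at distance exactly $2^i$ from $a$, since any path leaves $a$ through a set $S\in\cC_i$ and then uses another edge of weight $2^{i-1}$. Consequently, every vertex outside $\cC_i\cup\{a\}$ is at distance at least $2^i$ from $a$.
\item For a set $S\in\cC_j$ with $j\neq i$, any path from $S$ to $a$ must pass through an element vertex $u$: sets of different collections are joined only through elements, and auxiliary vertices only through their own collection. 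Hence $d(S,a)\ge 2^{j-1}+2^{i-1}+2^{i-1}=2^{j-1}+2^i$.
\end{itemize}

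\emph{Inductive hypothesis.} Suppose we have already found distinct centers $c_k,\dots,c_{i+1}$ with $c_j\in\cC_j$ and $\rad(c_j)\ge 2^{j-1}$. These contribute at least $\sum_{j>i}2^{j-1}=2^k-2^i$ to the cost. Since the total cost is at most $2^k-1$, two consequences follow.
\begin{itemize}
\item The remaining clusters have total radius at most $2^i-1$.
\item Each $c_j$ with $j>i$ has $\rad(c_j)<2^{j-1}+2^i$, because otherwise the cost would be at least $2^k$.
\end{itemize}
By the last distance fact, no auxiliary vertex of $A_i$ can then be assigned to any $c_j$ with $j>i$.

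\emph{Counting step.} All $k+1$ vertices of $A_i$ are therefore served by the remaining centers, of which there are at most $k-(k-i)=i<k+1$. So some $a\in A_i$ is not itself a center, and it is assigned to a remaining center $c$ with $\rad(c)\ge d(a,c)\ge 2^{i-1}$. If $c\notin\cC_i$, then by the distance facts $d(a,c)\ge 2^i$. That single cluster would exceed the remaining budget $2^i-1$, a contradiction. Hence $c\in\cC_i$, and we set $c_i:=c$. It is distinct from $c_{i+1},\dots,c_k$ because it was chosen among the remaining centers. The base case $i=k$ is the same argument with an empty set of earlier centers.

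The main obstacle is ensuring distinctness, i.e.\ ruling out that one large cluster centered in some $\cC_j$ also absorbs the auxiliary vertices of a lower level $A_i$. This is exactly what the bound $d(\cC_j,A_i)\ge 2^{j-1}+2^i$, combined with the tight budget $2^k-1$, handles. The rest of the argument is routine distance bookkeeping in the weighted graph.
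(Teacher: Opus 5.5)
Your proof is correct and follows the paper's argument: pigeonhole on the $k+1$ vertices of $A_i$, with the non-center $a$ forced onto a center of $\cC_i$ because any other center is at distance at least $2^i$ and would break the budget. You phrase it as a downward induction, whereas the paper takes a contradiction at the largest failing index $\ell$.

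Your extra bound $d(S,a)\ge 2^{j-1}+2^i$ for $S\in\cC_j$, $j\neq i$, is worth keeping. The paper's estimate $\cost((C,\sigma))\ge 2^\ell+\sum_{i>\ell}2^{i-1}$ tacitly assumes that the center serving $a$ is not one of the already-counted centers from $\cC_{\ell+1},\dots,\cC_k$, and your bound is exactly what justifies this.

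One small slip: element vertices are at distance at least $2^i$ from $a$, not exactly $2^i$. This is harmless, since you only use the lower bound.
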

\begin{proof}
    For contradiction, assume there is a solution $(C,\sigma)$ with cost at most $2^k-1$ but there is no center from some $\cC_i$ with radius at least $2^{i-1}$. Let $\ell \in [k]$ be the largest index such that there is no center in $C$ from $\cC_\ell$ with radius at least $2^{\ell-1}$. As there are $k+1$ vertices in $A_\ell$, there is some $a \in A_\ell$ that is not a center in $C$. Now, let $c \in C$ be the center such that $\sigma(a)=c$. Then, first note that $c \notin\cC_i$, since $d(c,a) = 2^{\ell-1}$, while $\rad(c) < 2^{\ell-1}$. Since the distance of $a$ from any vertex other than vertices in $\cC_\ell$ is at least $2\cdot 2^{\ell-1}=2^\ell$,    we have that $d(a,c) \ge 2^\ell$. Therefore, we have that $\cost((C,\sigma)) \ge 2^\ell + \sum_{i = \ell+1}^k 2^{i-1} = 2^k$, a contradiction to the cost of $(C,\sigma)$.
\end{proof}

Suppose there is no  solution to the \scover instance $\cJ$. We show that any clustering using at most $k$ centers has cost at least $2^k$. For contradiction, assume that there is a solution $(C,\sigma)$ to $\cI$ with $k$ centers and $\cost((C,\sigma)) < 2^k$. Since the edge weights are integers, we have that $\cost((C,\sigma)) \le 2^k-1$. Then, using~\Cref{cl:soundness}, we have that, for every $i \in [k]$,  there exists $c_i \in C$ such that $c_i \in \cC_i$. Pick the sets corresponding to the centers in $C$, and let $\cS=\{S_1,\dots, S_k\}$ be the resulting collection, such that $S_i$ is the set corresponding to $c_i$. Then, we have that,  for every $i \in [k]$, it holds that $|\cS \cap \cC_i|=1$. Now, we claim that the sets in $\cS$ cover $\cU$, contradicting the fact that there is no solution to $\cJ$. To see this, using~\Cref{cl:soundness} and the fact that $\cost((C,\sigma)) \le 2^k-1$, we have that $\rad(c_i) = 2^{i-1}$. Therefore, every assigned element vertex to $c_i$ is contained in $S_i$, and hence $\cU \subseteq \bigcup_{i \in [k]} S_i$.

This finishes the proof that \msr is \Wtwo-hard.

\paragraph*{Gap and inapproximability.}
The reduction yields:
\begin{itemize}
    \item If $\cJ$ is a YES-instance, then $\opt(\cI) \le 2^k - 1$.
    \item If $\cJ$ a NO-instance, then $\opt(\cI) \ge 2^k$.
\end{itemize}

Therefore, there is a gap of at least a factor $\frac{2^k}{2^k - 1} = 1 + \frac{1}{2^k - 1}$. 

Now, suppose that there is an algorithm $\cA$ that given an instance of \msr and any $\eps>0$, outputs a $(1+\eps)$-approximate solution in running time $f(k,\eps) \poly(n)$. Then, we show that \scover admits an \fpt-algorithm. Given an instance $\cJ$ of \scover, we first reduce it to obtain an instance $\cI$ of \msr using the above reduction. Then, we use $\cA$ on $\cI$ with $\eps=2^{-k}$, and output YES if the cost of solution returned by $\cA$ is $< 2^k$, and NO otherwise. For correctness, suppose $\cJ$ is an YES instance, then we have $\opt(\cI) \le 2^k-1$, and hence $\cA$ returns a solution with cost $\le (1+\eps) (2^k-1) = (2^k - 2^{-k}) < 2^k$, and hence our algorithm outputs YES, as desired. Now, suppose $\cJ$ is a NO instance, and hence $\opt(\cI) \ge 2^k$. Therefore, our algorithm outputs NO, as desired.

For the \eth-hardness, the same argument goes through with the fact that there is no $g(k,\eps)|\cJ|^{o(k)}$ algorithm for \scover, assuming \eth.

\end{proof}

\subsection{Limitations of Previous Hardness Result}\label{ss:hardissue}

We briefly explain the issue with the lower-bound construction of Bandyapadhyay and Chen~\cite{DBLP:conf/approx/BandyapadhyayC25}. Their reduction is from \domset. Given a graph $G$, the construction augments $G$ by adding two sets of $k$ vertices and connecting each such vertex to every vertex of $G$. We refer to these as \emph{auxiliary vertex sets}, and to their elements as \emph{auxiliary vertices}. 
The groups are defined as follows: one group for each auxiliary vertex set and one group corresponding to the original vertex set of $G$. The proportions are chosen so that, for each auxiliary group, $\alpha_i = 1/(n+3)$ and $\beta_i = 1$, while for the original group, $\alpha = 1/3$ and $\beta = 1$, where $n$ denotes the total number of vertices in the constructed \msr instance. The metric is given by the shortest-path distances in the resulting graph.

In the yes-instance, there is a feasible \msr solution of cost $k$, obtained by selecting the vertices corresponding to a dominating set of $G$. For the no-instance, the argument in~\cite{DBLP:conf/approx/BandyapadhyayC25} relies on the observation that singleton clusters violate the fairness constraints, and therefore concludes that zero-radius clusters cannot occur. This is then used to argue that any feasible solution to \msr can be used to construct a dominating set of size $k$.

The issue is that, while singleton clusters are indeed infeasible, this does not preclude the existence of zero-radius clusters. In particular, empty clusters satisfy the fairness constraints and have zero radius. Exploiting this, one can construct a low-cost solution as follows: choose a center from one of the auxiliary sets, which—via the added edges—covers all points within distance $2$. This forms a large cluster that satisfies the fairness constraints. The remaining $k-1$ clusters can then be opened at arbitrary vertices but left empty (their centers are assigned to the large cluster). The resulting solution remains feasible and has total cost at most $2$, contradicting the intended hardness gap.

More generally, this highlights a limitation of reductions that introduce auxiliary vertices enabling \emph{shortcuts} in the metric. Under mergeable constraints, such constructions can collapse into a single large cluster, while the remaining clusters can be left empty without violating feasibility, thereby yielding artificially low-cost solutions.
\bibliography{refs}
\bibliographystyle{plain}

\appendix

\section{Ommitted Proof}\label{app:omm}
\BCStructural*
\begin{proof}
Similar to the proof in~\Cref{lem:4apx}, we define the intersection graph $G$ over $\cF$, where each vertex corresponds to a ball in $\cF$, and two vertices are connected if the corresponding balls intersect. We process each connected component of $G$ independently.
Let $C$ be any connected component of $G$. Denote by $\cF_C$ the set of balls in $C$, and let $\cR_C$ be the sum of radii of balls in $\cF_C$. 
Let $X_C$ be the union of the points contained in the balls of $\cF_C$.

\begin{claim}\label{cl:bc43}
    There exists a point $p_C \in X_C$ such that $d(p_C,q) \le \tfrac{4}{3}\cR_C$, for any $q \in X_C$.
\end{claim}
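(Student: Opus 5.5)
We will prove \Cref{cl:bc43} by exhibiting $p_C$ explicitly, using a weighted-centroid argument on a spanning tree of the component. Then we compare two candidate points and take the better one. Write $\cF_C=\{B_j=\ball(c_j,r_j)\}$ and $\cR_C=\sum_j r_j$. Since $C$ is connected in $G$, fix a spanning tree $T$ of $C$, and give each node $B_j$ the weight $r_j$.

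The basic estimate is a chaining bound. Let $B_{j_0},B_{j_1},\dots,B_{j_t}$ be a path in $T$ with $q\in B_{j_t}$. Pick a point $x_s\in B_{j_{s-1}}\cap B_{j_s}$ for each $s$. Each step between consecutive chosen points stays inside a single ball, so it costs at most that ball's diameter. This gives:
\begin{itemize}
\item $d(c_{j_0},q)\le r_{j_0}+2\sum_{s\ge1} r_{j_s}$, when we start from the center of the first ball;
\item $d(y,q)\le 2\sum_{s\ge0} r_{j_s}$, when we start from an arbitrary point $y\in B_{j_0}$.
\end{itemize}

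Next, take a weighted centroid $B=\ball(c,r)$ of $T$. This is a node such that every component of $T-B$ has total weight at most $\cR_C/2$; such a node exists by the standard argument of walking toward the heavy side. Let $T_1$ be the heaviest component of $T-B$, with weight $M$. Then $M\le \min(\cR_C/2,\ \cR_C-r)$. Every point of $X_C$ lies either in $B$ or in a ball of some component $T_i$.

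\emph{Candidate 1: $p_C=c$.} By the first chaining bound, every $q\in X_C$ satisfies $d(c,q)\le r+2M$. If $M<\cR_C/3$, then using $r\le \cR_C-M$ we get $r+2M\le \cR_C+M<\tfrac43\cR_C$, and we are done.

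\emph{Candidate 2: the case $M\ge \cR_C/3$.} Let $B_1\in T_1$ be the neighbor of $B$ in $T$, and choose $p_C=y\in B\cap B_1$. We bound distances from $y$ in two groups.
\begin{itemize}
\item For $q$ in a ball of $T_1$, start the chain at $B_1\ni y$, staying within $T_1$. This gives $d(y,q)\le 2M\le \cR_C$.
\item For $q$ in $B$ or in another component $T_i$, start the chain at $B\ni y$. This gives $d(y,q)\le 2\bigl(r+\sum_{i\ne1}w(T_i)\bigr)=2(\cR_C-M)\le \tfrac43\cR_C$.
\end{itemize}
Hence in every case some point of $X_C$ has eccentricity at most $\tfrac43\cR_C$ within $X_C$. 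Note that the naive choice of the center of the centroid fails when $r$ and $M$ are both about $\cR_C/2$, which is why the second candidate is needed.

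To finish the proof of \Cref{lem:bc43}, assign all of $X_C$ to $p_C$ in each component. Summing over components gives cost at most $\tfrac43\cost(\cF)$. Feasibility follows verbatim from the argument in \Cref{lem:4apx}: no optimal cluster is split across components, so mergeability applies.

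The main obstacle is the case analysis on the centroid, namely realizing that the center of the centroid ball alone is insufficient and must be combined with an intersection point toward the heavy branch. The thresholds $\cR_C/3$ are exactly where $r+2M$ and $2(\cR_C-M)$ cross $\tfrac43\cR_C$, which matches the tightness example in \Cref{fig:tightbound}.
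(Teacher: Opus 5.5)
Your proof is correct, and it takes a different route from the paper's. The paper first handles the case where some ball has radius at least $\tfrac23\cR_C$ by taking its center. Otherwise it takes a maximum-weight leaf-to-leaf path $\mu$ in a spanning tree and walks along it to the first index $i^*$ at which the accumulated weight (path balls plus their hanging subtrees) reaches $\tfrac23\cR_C$. It then splits on whether the weight strictly before $i^*$ is at least $\tfrac13\cR_C$ (choose a point of $\ball(\mu_{i^*-1})\cap\ball(\mu_{i^*})$) or not (choose the center of $\ball(\mu_{i^*})$). The subtrees hanging off $\mu_{i^*}$ itself are bounded using the maximality of $\mu$, which in effect needs a splicing argument that attaches the path $\mu'$ to one half of $\mu$.

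You instead take a weighted centroid and let the single quantity $M$ (the weight of the heaviest branch) decide between the center of the centroid ball and a point where it meets the heavy neighbor. Your two candidates are of the same two types as the paper's, a ball center or an intersection point of adjacent balls. The difference is that the centroid's balance property replaces both the paper's big-ball case (a ball of radius $>\cR_C/2$ is automatically the centroid) and the path-maximality step. As a result, every bound in your proof is a one-line chaining estimate, and the case analysis is shorter. In fact you only use that every branch has weight at most $\tfrac23\cR_C$, so any node with that property would do.

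The paper's argument avoids the centroid notion and works directly along a heaviest path. The price is an extra top-level case and the more delicate hanging-subtree step.

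One trivial edge case: if $C$ is a single ball of radius $0$, your Candidate~2 branch is entered with no $T_1$. Stating Candidate~1 for $M\le\cR_C/3$ removes this, and your bound $r+2M\le\cR_C+M$ already permits the non-strict inequality.
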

\begin{proof}
    Suppose there is a ball $B=\ball(c,R)$ in $\cF_C$ such that $R \geq \tfrac{2}{3}\cR_C$, then consider the center $c$ of $B$. We have that, using triangle inequality, for any $q \in X_C$,
    \[
    d(c,q) \leq 2\cR_C - R \le \tfrac{4}{3}\cR_C,
    \]
    since there is a path from $B$ to a ball containing $q$ in $C$. Thus, the claim is true.

    Now, suppose 
\begin{align}\label{eqn:basic}
    \text{every ball in } \cF_C \text{ has radius }< \tfrac{2}{3}\cR_C.
\end{align} 
    Let $T_C$ be a spanning tree of $C$, and let $\mu=(\mu_1,\dots,\mu_\ell), \ell \le k$ be a leaf to leaf path such that the sum of radii of the balls along $\mu$ is maximum among all leaf to leaf paths. Observe that removing the vertices of $\mu$ from $T_C$ disconnects $T_C$ into connected components. Moreover, each such connected component of $T_C$ is connected to exactly one $\mu_i \in \mu$ via exactly one vertex of the  component. We say that such component is connected to $T_C$ via $\mu_i$.
    
    For each $\mu_i \in \mu$, we denote by $\ball(\mu_i)$ as the ball in $\cF_C$ corresponding to the vertex $\mu_i$ with radius $R_i$. Furthermore, denote by $\cB_{\mu_i}$ the set of balls of all components of $T_C - \mu$ that are connected via $\mu_i$. Moreover, $\cB_{\leq  \mu_i}:=\bigcup_{j\leq i} \cB_{\mu_j}$, and $\cB_{\geq \mu_i}:=\bigcup_{j\geq i} \cB_{\mu_j}$.
    Also, denote by $\ballp_{\mu_i} = \{\ball(\mu_i)\} \cup \cB_{\mu_i}$, and $\cost(\ballp_{\mu_i})$ as the sum of radii of balls in $\ballp_{\mu_i}$. Finally, we have the following notations for every $\mu_i \in \mu$.
    \begin{itemize}
        \item $\lsum(\mu_i):=  \sum_{j<i} \cost(\ballp_{\mu_j})$
       \item $\rsum(\mu_i):=  \sum_{j>i} \cost(\ballp_{\mu_j})$
      \item $\msum(\mu_i):=  \cost(\ballp_{\mu_i})$
     \item $\lmsum(\mu_i):= \lsum(\mu_i) +\msum(\mu_i)$
    \end{itemize}

Note that $\lsum(\mu_i)+ \msum(\mu_i) +\rsum(\mu_i) = \cost(\cF_C) = \cR_C$.

\begin{definition}\label{def:critical}
    Let $\mu_{i^*} \in \mu$ be such that $i^* \in [\ell]$ is the smallest index such that $\lmsum(\mu_{i^*}) \ge \tfrac{2}{3}\cR_C$.
\end{definition}
 First, note that $i^* >1$, since $\mu_1$ is a leaf in $T_C$ and the radius of $\ball(\mu_1) < \tfrac{2}{3}\cR_C$. Next, we divide into two cases.

\paragraph{Case 1. $\lsum(\mu_{i^*}) \ge \frac{\cR_c}{3}$.}
Since $i^*>1$, let $p_C$ be the point in $\ball(\mu_{i^*-1}) \cap \ball(\mu_{i^*})$. Then, for any $q \in \cB_{\geq \mu_{i^*}}$, using triangle inequality, we have
\[
d(p_C, q) \le 2(R_{i^*} + \rsum(\mu_{i^*})) \le \tfrac{4}{3}\cR_C,
\]
since $R_{i^*} + \rsum(\mu_{i^*}) \le \tfrac{2}{3}\cR_C$ in this case.

Now suppose $q \in \cB_{\le \mu_{i^*-1}}$, then by the property of $i^*$, we have that $\lsum(\mu_{i^*-1}) < \tfrac{2}{3}\cR_C$. Hence, 
\[
d(p_C,q) \le 2 \cdot \lmsum(\mu_{i^*-1})  \le \tfrac{4}{3}\cR_C,
\]
as desired.

\paragraph{Case 2. $\lsum(\mu_{i^*}) < \frac{\cR_c}{3}$.}
Let $p_C$ be the center of $\ball(\mu_{i^*})$. We claim that $p_C$ satisfies the desired property. 

\noindent
Towards this, we partition balls in $\cF_C$ into three parts:  $\cF_C^L$ that is union of the balls $\ballp_{\mu_1},\dots, \ballp_{\mu_{i^*-1}}$, $\cF_C^M := \ballp_{\mu_{i^*}}$, and $\cF_c^R$ is the union of the balls  $\ballp_{\mu_{i^*+1}},\dots, \ballp_{\mu_{\ell}}$. Similarly, we partition the points of $X_C$ into three parts: $X_C^L, X_C^M,X_C^R$ containing points from $\cF_C^L, \cF_C^M, \cF_C^R$, respectively. Now consider $q \in X_C$.

\textbf{If $q \in X_C^L$. }
In this case, we have
\[
d(p_C,q) \le R_{i^*} + 2\cdot \lsum(\mu_{i^*}) \le \tfrac{4}{3}\cR_C,
\]
since both $R_{i^*}$ and $2\cdot \lsum(\mu_{i^*})$ are at most $\le \tfrac{2}{3}\cR_C$ due to~\Cref{eqn:basic} and Case 2, respectively.

\medskip
\textbf{If $q \in X_C^R$. }
\[
d(p_C,q) \le R_{i^*} + 2\cdot \rsum(\mu_{i^*}) \le \tfrac{4}{3}\cR_C,
\]
since both $R_{i^*}$ and $2\cdot \rsum(\mu_{i^*})$ are at most $\le \tfrac{2}{3}\cR_C$ due to~\Cref{eqn:basic} and $\lmsum(\mu_{i^*}) \geq \tfrac{2}{3}\cR_C$ (\Cref{def:critical}), respectively.

\medskip
\textbf{If $q \in X_C^M$. }
Let $\cost(\mu)$ denote the sum of radii of balls corresponding to the vertices in $\mu$.
Consider any path $\mu'$ from $\mu_{i^*}$ to a leaf (ball) in $\cF_C^M$ that passes through a ball of $q$. Let $\mu' -\mu_{i^*}$ denote the path obtained from $\mu'$ by removing $\mu_{i^*}$. Then, we have
\begin{align}\label{eqn:eq2}
    \cost(\mu' -\mu_{i^*}) \le \lsum(\mu_{i^*}) + \rsum(\mu_{i^*}),
\end{align}
as otherwise, $\cost(\mu') = \cost(\mu' -\mu_{i^*}) + R_{i^*} > \lsum(\mu_{i^*}) + \rsum(\mu_{i^*})+ R_{i^*} = \cost(\mu)$, contradicting the fact that $\mu$ had maximum cost among all such paths.

On the other hand, since $\cR_C$ is the sum of radii of the balls in $\cF_C$, we have,
\[
\lsum(\mu_{i^*}) + \rsum(\mu_{i^*}) +  \cost(\mu' -\mu_{i^*}) + R_{i^*} \le \cR_C
\]
Hence, using the lower bound of~\Cref{eqn:eq2} with the above inequality, we have that
\begin{align*}
    2\cost(\mu' -\mu_{i^*}) + R_{i^*} \le \cR_C
\end{align*}
Therefore, we have that
\[
d(p_C,q) \le  R_{i^*} +     2\cost(\mu' -\mu_{i^*}) \le \cR_C,
\]
 as desired.   
    \end{proof}
Now we use~\Cref{cl:bc43} to finish the proof of the lemma as follows. For every component $C$, pick the center $s_C$ as the point that minimizes its maximum distance to any other point in $X_C$, and assign all the points in $X_C$ to $s_C$. First note that due to the choice of $s_C$, we have that, $\max_{q \in X_C} d(s_C,q) \le \max_{q \in X_C} d(p_C,q) \le \tfrac{4}{3}\cR_C$, due to~\Cref{cl:bc43}. Let $(S,\sigma)$ denote the resulting solution.
Moreover, since the balls across the components are disjoint, we have that 
   \[
    \cost((S,\sigma)) = \sum_{\text{component } C} \rad(c_C) \le \sum_{\text{component } C} \tfrac{4}{3}\cR_C = \tfrac{4}{3}\cost(\cF).
   \] 

Now, suppose $\cF$ is a $\Pi^*$-feasible cover for some optimal clustering $\Pi^*$ to $\cI$. Then there exists a ball $B_j \in \cF$ such that $\Pi_i^* \subseteq B_j$. Now consider any other ball $B_{j'} \in \cF$ that intersects $\Pi_i^*$. Since $\Pi_i^* \subseteq B_j$, we have $B_{j'} \cap B_j \neq \emptyset$, and hence $B_{j'}$ is adjacent to $B_j$ in $G$. Therefore, all balls that intersect $\Pi_i^*$ lie in the same connected component of $G$. In particular, all points of $\Pi_i^*$ are contained in the union of balls of a single component.
This implies that no optimal cluster is split across multiple components. Hence, merging all points within a component corresponds to merging entire optimal clusters, and by mergeability of the constraint, this yields a feasible cluster.
   
\end{proof}
\end{document}